\newcommand{\keti}[1]{\left|{\psi_{#1}}\right\rangle}
\newcommand{\lloyd}{Lloyd et al.~\cite{lloyd2016quantum}~}
\newtheorem{theorem}{Theorem}
\newtheorem{lemma}{Lemma}
\newtheorem{remark}{Remark}
\DeclareFontFamily{U}{wncy}{}
\DeclareFontShape{U}{wncy}{m}{n}{<->wncyr10}{}
\DeclareSymbolFont{mcy}{U}{wncy}{m}{n}
\DeclareMathSymbol{\Sh}{\mathord}{mcy}{"58} 
\newcommand{\RR}{\mathbb{R}}
\newcommand{\tH}{\tilde{\mathcal{H}}}
\newcommand{\tpar}{\tilde{\partial}}
\newcommand{\trace}{\mathrm{trace}}
\newcommand{\nv}{\mathrm{n_{v}}}
\newcommand{\nz}{\mathrm{nnz}}
\newcommand{\veps}{\varepsilon}
\newcommand{\eps}{\epsilon}
\newcommand{\PG}{P_{\Gamma}}
\newcommand{\defeq}{\stackrel{\textit{\tiny{def}}}{=}}
\newcommand{\img}{\mathrm{img}}
\newcommand{\rank}{\mathrm{rank}}
\newcommand{\bdiag}{\mathrm{Blockdiag}}
\newcommand{\pr}{\mathbb{P}}
\title{
Quantum Topological Data Analysis \\ with Linear Depth and Exponential
Speedup
}
 \author{Shashanka Ubaru\thanks{Equal Contribution.} \thanks{IBM Research,
 USA.} \and Ismail Yunus Akhalwaya\footnotemark[1] \thanks{IBM Research, South Africa. School of Computer Science and Applied Mathematics, University of the Witwatersrand, Johannesburg, South Africa.}  \and Mark S.\ Squillante\footnotemark[2] \and Kenneth L.\ Clarkson\footnotemark[2]
 \and Lior Horesh\footnotemark[2]}
 \date{}
\begin{document}

\maketitle

\begin{abstract}
Quantum computing offers the potential of exponential speedups for certain classical computations. Over the last decade, many quantum machine learning (QML) algorithms have been proposed as candidates for such exponential improvements. However, two issues unravel the hope of exponential speedup for some of these QML algorithms: the data-loading problem and, more recently, the stunning ``dequantization'' results of
Tang et~al. A third issue, namely the fault-tolerance requirements of most QML algorithms, has further hindered their practical realization.
The quantum topological data analysis (QTDA) algorithm of Lloyd, Garnerone and Zanardi was one of the first QML algorithms that convincingly offered an expected exponential speedup. From the outset, it did \emph{not} suffer from the data-loading problem.
A recent result has also shown that the generalized problem solved by this algorithm is likely classically intractable, and would therefore be immune to any dequantization efforts.
However, the QTDA algorithm of Lloyd et~al.\ has a time complexity of  $O(n^4/(\eps^2 \delta\sqrt{\zeta}))$ (where $n$ is the number of data points, $\eps$ is the error tolerance, $\delta$ is the smallest nonzero eigenvalue of the restricted Laplacian, and $\zeta$ is the fraction of all simplices in the complex) and requires fault-tolerant quantum computing, which has not yet been achieved. In this paper, we completely overhaul the QTDA algorithm to achieve an improved exponential speedup and depth complexity of ${O}(n\log(1/(\delta\eps)))$.
The latter depth complexity opens the door for an implementation on near-term quantum hardware, potentially making it the first useful  algorithm to achieve quantum advantage on general classical data.
Our approach includes three key innovations: (a) an efficient realization of the combinatorial Laplacian as a sum of Pauli operators; (b) a quantum rejection sampling and projection approach to restrict the superposition to the simplices of the desired order in the complex (replacing Grover's search of Lloyd et al.); and (c) a stochastic rank estimation method to estimate the Betti numbers (replacing quantum phase estimation of Lloyd et al.). We present a theoretical error analysis for the proposed algorithm, and present the circuit and computational time and depth complexities for Betti number estimation up to
the error tolerance $\eps$.
The techniques presented herein have wider potential applications than QTDA or even rank estimation.
\end{abstract}


\section{Introduction}\label{sec:intro}
With the realization of the computational power of quantum computing in the 1980s, there has been an active search for algorithms that exploit this power to achieve exponential  speedups  over  classical
(i.e., non-quantum)
algorithms on digital computers. Since the power of quantum computing can be characterized as the ability to perform certain linear-algebraic operations in 
exponentially large spaces, several algorithms have been proposed for quantum machine learning (QML) from the perspective of linear algebra~\cite{biamonte2017quantum,schuld2015introduction,schuld2019quantum}. 
Learning models based on parameterized quantum circuits have also been proposed~\cite{havlivcek2019supervised,benedetti2019parameterized,beer2020training,liu2020rigorous}. These QML methods were suggested as potential examples of useful quantum applications that achieve polynomial to exponential speedups over the best-known classical methods for machine learning
.\footnote{Here, by \emph{exponential speedup}, we informally mean that, given classical algorithms taking time $T$ for a problem, there is a quantum algorithm
that takes time $O(\log^cT)$, for a constant $c$, and thus the classical runtime is
essentially
exponential in the quantum runtime. A \emph{super-polynomial speedup} implies that the quantum algorithm takes time smaller than $T^c$ for any fixed $c>0$.}

Despite this great potential, theoretical and practical hurdles remain. 
Two of the most serious concerns are the data-loading\footnote{The \emph{data-loading} problem refers to the case where, although a quantum computer might take time $O(\log(N))$ to solve a particular problem on input of size $N$, the time needed for the
(classical)
data to be set up in
/ ported to
the quantum system might still well be $\Omega(N)$. Sometimes the QRAM proposal may help, but then this triggers three other issues: QRAM may be considered a ``cheat'' since ``exponential'' hardware is still required; the scenarios where QRAM assumptions apply might render the setup prone to dequantization; and, most troubling from a practical perspective, QRAM coupled to real quantum computers are not yet available nor are they expected to be for the near-term future.} problem~\cite{aaronson2015read} and, more dramatically, classical dequantization. Recent developments of ``dequantized'' classical algorithms, which operate under analogous assumptions~\cite{tang2019quantum,chia2020sampling,chepurko2020quantum}, have reduced the potential speedups of many of these linear-algebraic QML  proposals to be at most polynomial.
A third issue, which is a major focus of our attention, is the common requirement of fault-tolerance for QML algorithms. However, fault-tolerance  has not yet been achieved on currently available quantum devices, and we are likely several years away from its realization. Of course, even if fault-tolerance becomes available, it still is advantageous to have runtimes that are as small as possible. 
Moreover, the previous parameterized circuit-based methods~\cite{havlivcek2019supervised,benedetti2019parameterized,beer2020training,liu2020rigorous} are limited to showing quantum speedup only for a specific family of
hand-crafted
data-sets that are likely not practically motivated.

With these considerations, the first big question which arises concerns whether we can develop any QML algorithms that have circuit depth linear in the number of data points $n$
and that ``provably''
achieve the tantalizing promise of asymptotic exponential speedup for general classical data. If we then go further and bring in the challenge of achieving ``quantum advantage'' in the practical non-fault-tolerant regime, the question is no longer
only
about asymptotics but
also about
the realization of a useful algorithm under the constraints of noise and a fixed constant-depth. Of course, ${O}(\log n)$ and maybe ${O}(n)$ run-time algorithms have a fighting chance
in this regard, but in the end it becomes a question of actual circuit depths versus current hardware performance numbers. The term adopted by the research community to describe current hardware performance levels is Noisy Intermediate-Scale Quantum (NISQ) devices~\cite{preskill2018quantum}. Achieving quantum advantage on NISQ devices for a useful algorithm is recognized as a very important goal for near-term commercial viability, enabling a more reliable
path towards full fault-tolerant quantum computing.

\paragraph{Topological Data Analysis (TDA):}
TDA has garnered significant interest in the applied mathematics community~\cite{zomorodian2005computing,ghrist2008barcodes,bubenik2015statistical,wasserman2018topological}. It is one of the few ``big data'' algorithms that can consume massive datasets and reduce them to a handful of global and interpretable signature numbers, laden with predictive and analytical value. Unfortunately, classical
algorithms for TDA have significant computational demands that are currently addressed by sampling and by avoiding high-order Betti numbers. 
Recently, in a
seminal
article, \lloyd proposed a quantum algorithm for TDA (QTDA) that  achieves an  expected  exponential  speedup  under certain conditions. Moreover, the method does not suffer from the data loading problem. Follow-up articles~\cite{gunn2019review,gyurik2020towards} have studied the Lloyd et al.\ algorithm in detail. 
In particular, Gyurik et al.~\cite{gyurik2020towards} showed that the QTDA algorithm solves a problem which is likely classically intractable by establishing that a generalization of the problem is as hard as simulating the one clean qubit model; i.e., it is DQC1-hard~\cite{cade2017quantum}. Classical computers are believed to require super-polynomial time to simulate DQC1, and hence it is argued that the QTDA algorithm enjoys exponential speedup that is likely to be immune to dequantization~\cite{gyurik2020towards}.

The QTDA algorithm due to \lloyd involves two main steps, namely: (i) repeatedly constructing the simplices in the Vietoris-Rips simplicial complex as a mixed state;\footnote{In our understanding, the mixed state is actually unnecessary, as a single simplex can be efficiently drawn classically.} and (ii) projecting this onto the kernel of a simplicial-complex-restricted combinatorial Laplacian operator in order to calculate the dimension of the kernel (see Section~\ref{ssec:tda} for the corresponding definitions). The estimation of the kernel dimension allows the calculation of the topological global invariants known as \emph{Betti numbers}~\cite{friedman1998computing}  (which represent  the  number  of  ``non  filled-in'' structures such as clusters, loops, voids, and so on). The first step is achieved using Grover's search algorithm~\cite{boyer1998tight}, and the second step (kernel dimension estimation) is computed using quantum phase estimation (QPE)~\cite{nielsen2010quantum} of an operator involving the restricted Laplacian.\footnote{Restricting the Laplacian again requires Grover's search.}
Notwithstanding its computational advantages, the QTDA approach of \lloyd strongly requires fault-tolerant quantum computing.
This follows not only from the high-order polynomial complexity, i.e., ${O}(n^4/(\delta\sqrt{\zeta}))$ where $\delta$ denotes the smallest nonzero eigenvalue of the restricted Laplacian and $\zeta$ is the fraction of all simplices in the complex,
resulting in very deep circuits. But also from the fact that Grover's search and QPE, as used in QTDA, are not robust to errors. In particular, both sub-algorithms require precise phase information where any errors would accumulate multiplicatively (leading to catastrophic information loss without the possibility of errors between runs averaging out).

\paragraph{Our Contributions:}
We present here, to the best of our knowledge, the first QML algorithm with
linear-depth complexity
and provable\footnote{Under the generally accepted belief that DQC1 is classically intractable.} exponential speedup on arbitrary input, possibly opening the doors to the first generically useful NISQ algorithm with quantum advantage.
In particular, we present \emph{NISQ-QTDA}, a quantum topological data analysis algorithm that has an improved exponential speedup and
a depth complexity of ${O}(n\log(1/(\delta\eps)))$, where $\eps$ denotes the error tolerance.
This is achieved via three key innovations, namely: (a) an efficient representation of a boundary operator as a sum of Pauli operators; (b) a quantum rejection sampling technique to build the relevant simplicial complex; and (c) a stochastic rank estimation method to estimate the Betti numbers that does not require Quantum Phase Estimation (QPE), i.e., the eigen-decomposition of the combinatorial Laplacian. 
Our approach
facilitates the elimination of
the fault-tolerance requirement based on a couple of key observations.
First, while QPE provides estimates of the actual eigenvalues, we only need the dimension of the kernel, that is, the count of the zero eigenvalues, and thus we are able to replace QPE with an alternative method for this dimension/eigencount estimation that has ${O}(n\log(1/(\delta\eps)))$ depth. 
Second, Grover's search algorithm, which is used to build the simplicial complex and restrict the Laplacian, is quite expensive depth-wise and achieves only a quadratic speedup. We are able to design an alternative scheme based on projections and rejection sampling that
has ${O}(n)$ depth, but that admittedly does require more trials. Excitingly, the technology making this projection approach possible on current quantum computing hardware, namely mid-circuit measurements, has only just recently become available.

\vskip-0.1in
\paragraph{Outline:}
The remainder of the paper is organized as follows.
We first provide in Section~\ref{sec:prelim} some technical preliminaries, including an introduction to the relevant TDA concepts and the QTDA algorithm of Lloyd et al.~\cite{lloyd2016quantum}.
Our proposed \emph{NISQ-QTDA} algorithm and related technical details are presented in Section~\ref{sec:nisq}. Section~\ref{sec:analysis} presents the error and computational complexity analyses of our algorithm, where we also discuss the different settings under which the proposed algorithm has NISQ implementation and achieves improved exponential speedup.
We conclude with some final remarks.


\section{Preliminaries}\label{sec:prelim}
This section provides various technical background and preliminaries. We start by introducing some of the relevant  linear-algebraic concepts of TDA, and then present the quantum TDA algorithm of \lloyd.

\subsection{Topological Data Analysis}\label{ssec:tda}

Given a set of  data-points in some ambient (possibly high-dimensional) space, TDA aims to extract a small set of robust and interpretable features that capture the ``shape'' of the dataset~\cite{zomorodian2005computing,bubenik2015statistical,wasserman2018topological}. \emph{Persistent Homology}~\cite{ghrist2008barcodes} aims to compute these features at different scales of resolution, thus obtaining a set of local and global features to describe the data distribution topologically.
These topological features are not only invariant under rotation and translation, but are also generically robust with respect to the data representation, the data sampling procedure, and noise.
For these reasons, TDA has become a powerful tool used in many data applications~\cite{wasserman2018topological}.
  
The theory and applications of TDA have been extensively studied in the research literature. Here we will focus on the linear-algebraic concepts of TDA that are relevant for the development of our NISQ-QTDA algorithm. We begin with the concept of a simplical complex that is derived from data-points embedded in some ambient space. 
A  $k$-simplex is a collection of $k+1$ vertices forming a simple polytope of dimension~$k$; e.g., $0$-simplices are single points (zero-dimensional), $1$-simplices are line segments (one-dimensional), $2$-simplices are triangles (two-dimensional), and so on.
A simplicial complex is a collection of such simplices (of any order), closed under adding all \emph{lower} simplices, which are simplices obtained by removing one vertex (e.g., if a triangle is in a complex, then all three associated edge simplices are also in the complex and, recursively, so are all three associated points). Homology provides us with a linear-algebraic
approach to extract, from simplicial complexes derived from the data, features that
describe the ``shape'' of the data, such as the number of connected components,
or holes (as in doughnuts), or voids (as in swiss cheese), or
higher-dimensional holes/cavities.

Given a set of~$n$ data-points $\{x_i\}_{i=0}^{n-1}$ in some space together with a distance metric $\mathcal{D}$, a Vietoris-Rips~\cite{ghrist2008barcodes} simplicial complex is constructed by selecting a resolution/grouping scale $\veps$ that defines the ``closeness'' of the points with respect to the distance metric $\mathcal{D}$, and then connecting the points that are a distance of $\veps$ from  each  other (i.e., connecting points $x_i$ and $x_j$ whenever $\mathcal{D}(x_i,x_j)\leq \veps$, forming a so-called 1-skeleton).
A $k$-simplex is then added for every subset of $k+1$ data-points that are pair-wise connnected (i.e., for every $k$-clique, the associated $k$-simplex is added). The resulting simplicial complex is related to the clique-complex from graph theory~\cite{gyurik2020towards}.

Let $S_k$ denote the set of $k$-simplices in the  Vietoris–Rips complex $\Gamma=\{S_k\}_{k=0}^{n-1}$, with $s_k\in S_k$ written as $[j_0,\ldots,j_k]$ where $j_i$ is the $i$th vertex of $s_k$.
Let $\mathcal{H}_k$ denote an ${n \choose k+1}$-dimensional Hilbert space,
with basis vectors corresponding to each of the possible $k$-simplices (all subsets of size $k+1$). Further let $\tH_k$ denote the subspace of $\mathcal{H}_k$ spanned by the basis
vectors corresponding to the simplices in $S_k$,
and let $\ket{s_k}$ denote the basis state corresponding to $s_k\in S_k$.
Then, the $n$-qubit Hilbert space $\mathbb{C}^{2^n}$ is given by
$\mathbb{C}^{2^n}\cong  \bigoplus_{k=0}^n \mathcal{H}_k$.
The boundary map (operator) on $k$-dimensional simplices $ \partial_k:  \mathcal{H}_k \rightarrow {\mathcal{H}}_{k-1}$ is a linear operator
defined by its action on the basis states as follows:
\begin{align}
  \partial_k  \ket{s_k} &= \sum_{l=0}^{k-1} (-1)^l \ket{s_{k-1}(l)} ,
\end{align}
where $\ket{s_{k-1}(l)}$ is the \emph{lower} simplex obtained by leaving out vertex~$l$ (i.e., $s_{k-1}$ has the same vertex set as $s_{k}$ except without $j_l$). Naturally, $s_{k-1}$ is $k-1$-dimensional, one 
dimension
less than $s_{k}$. The factor $(-1)^l$  produces the so-called \emph{oriented}~\cite{ghrist2008barcodes} sum of boundary simplices, which keeps track of neighbouring simplices so that $\partial_{k-1}\partial_{k} \ket{s_k} = 0$, given that the boundary of the boundary is empty.

The boundary map $\tpar_k:  \tH_k \rightarrow \tH_{k-1}$ restricted to a given  Vietoris–Rips complex $\Gamma$  is given by $\tpar_k = \partial_k\tilde{P}_k$, where $\tilde{P}_k$ is the projector onto the space $S_k$ of $k$ simplices in the complex.
The full boundary operator on the fully connected complex (the set of all subsets of $n$ points)
is the direct sum of the $k$-dimensional boundary operators, namely
$
  \partial =  \bigoplus_k \partial_k.
$

The \emph{$k$-homology group} is the quotient space $\mathbb{H}_k := \ker(\tpar_k)/ \img(\tpar_{k+1})$, representing all $k$-holes which are not ``filled-in'' by $k+1$ simplices and counted
once when connected by $k$ simplices (e.g., the two holes at the ends of a tunnel count once). Such global structures moulded by local relationships is
what is meant by the ``shape'' of data. 
The \emph{$k$th Betti Number} $\beta_k$ is the dimension of this $k$-homology group, namely
\[
	\beta_k := \dim \mathbb{H}_k \;\; .
\]
These Betti numbers therefore count the number of holes at scale $\veps$, as described above. By computing the Betti numbers at different scales $\veps$, we can obtain the \emph{persistence barcodes/diagrams}~\cite{ghrist2008barcodes}, i.e.,
a set of powerful interpretable topological features that account for different scales while being robust to small perturbations and
invariant to various data manipulations. These stable persistence diagrams not only provide information at multiple resolutions, but they also help identify, in an unsupervised fashion, the resolutions at which interesting structures exist.

\paragraph{Betti number estimation:}
The \emph{Combinatorial Laplacian}, or Hodge Laplacian, of a given complex is defined as $\Delta_k := \tpar_k^{\dagger}\tpar_k + 	\tpar_{k+1}\tpar_{k+1}^{\dagger}.$ From the Hodge theorem~\cite{friedman1998computing, lim_hodge_2019},
we can compute the $k$th Betti number as
\begin{equation}
	\beta_k := \dim \ker(\Delta_k).
\end{equation}
Therefore, computing Betti numbers for TDA can be viewed as a rank estimation problem (i.e., $\beta_k = \dim\tH_k - \rank(\Delta_k)$).

The problem of  Betti number estimation (BNE) can be defined as follows~\cite{gyurik2020towards}:
Given a set of $n$ points, its corresponding Vietoris–Rips complex $\Gamma$, an integer $0\leq k\leq n-1$, and the parameters $(\epsilon,\eta)\in (0,1)$, find the random value
$\chi_k\in[0,1]$ that satisfies with probability $1-\eta$ the condition
\begin{equation}
    \left| \chi_k - \frac{\beta_k}{\dim\tH_k}\right| \leq \epsilon,
\end{equation}
where $\dim\tH_k$ is the dimension of the Hilbert space spanned by the set of $k$-simplices in the complex (i.e., the number of $k$-simplices $|S_k|$ in $\Gamma$).
Gyurik et al.~\cite{gyurik2020towards} showed that this problem is likely classically intractable; specifically, it is shown that a generalization of the problem is as hard as simulating the one clean qubit model (DQC1-hard), which is believed to take super-polynomial time to compute on a classical computer. They also argued that BNE is likely to be immune to dequantization. In this paper, we discuss and address quantum algorithms for BNE.

\subsection{QTDA Algorithm}\label{ssec:qtda}
The seminal approach of \lloyd to estimate the Betti numbers using quantum computers, which was further analyzed by Gunn  and Kornerup~\cite{gunn2019review} and Gyurik et al.~\cite{gyurik2020towards}, comprises two main steps.
The first step of the algorithm is to create a mixed state $\rho_k$ over the states $\ket{s_k}$ of $k$-simplices (over $\tH_k$) that are in the  complex $\Gamma$.
The second step is to use Hamiltonian simulation (of the boundary operator or the Laplacian $\Delta_k$) and  quantum phase estimation (QPE) with $\rho_k$ in the input register (repeatedly projecting simplices from the complex onto the kernel) to estimate the kernel dimension of the Laplacian.

In order to prepare the maximally mixed state $\rho_k$ as part of the first step,
the QTDA algorithm first uses Grover's search algorithm~\cite{boyer1998tight} to construct the $k$-simplex state 
\[
\ket{\psi_k} = \frac{1}{\sqrt{|S_k|}}\sum_{s_k\in S_k}\ket{s_k},
\]
for the set $S_k$ with $|S_k| = \dim\tH_k$. Then, the mixed state 
\[
\rho_k =  \frac{1}{|S_k|}\sum_{s_k\in S_k}\ket{s_k}\bra{s_k}
\]
can be prepared from $\keti{k}$ by applying the CNOT gate to each qubit and tracing out into the ancilla zero qubits. The time complexity of this step is $O\left(\frac{k^2}{\sqrt{\zeta_k}}\right)$, where 
$\zeta_k := \frac{|S_k|}{{n \choose k+1}}$ is the fraction of $k$-simplices that are in the complex $\Gamma$. The number of gates 
required for this step is $O\left(kn^2+\frac{nk}{\sqrt{\zeta_k}}\right)$~\cite{gunn2019review}. We believe this step is unnecessary, because a random simplex of order $k$ can be drawn from the complex efficiently. Nevertheless, the same Grover's search is needed to restrict the boundary operator to the complex in the next step. 


The second step uses QPE to estimate the kernel dimension of the Laplacian $\Delta_k$. For this, the following \emph{Dirac operator} (the square root of the generalized Laplacian)
\begin{equation}
\tilde{B} = 
    \begin{pmatrix}
0 & \tpar_1 & 0 & \cdots & \cdots & 0\\
 \tpar_1^{\dagger} & 0 & \tpar_2 & 0 & \cdots & 0\\
0 &  \tpar_2^{\dagger} & 0 & \ddots  & \cdots & 0\\
\vdots & \vdots &\ddots &\ddots &\ddots &\vdots \\
\vdots &\vdots &\vdots &\ddots & 0  &\tpar_{n-1} \\
0 & 0 & 0 & \cdots & \tpar_{n-1}^{\dagger}& 0\\
\end{pmatrix}
\end{equation}
is first simulated such that  $\tilde{B^2} = \bdiag\left[\Delta_1, \ldots, \Delta_n\right]$  is a block diagonal matrix\footnote{The block diagonal form is obtained in the Hamming weight sorted representation of the simplices.}, since $\tpar_k\tpar_{k+1}= 0$. 
Given that $\tilde{B}$ has the same nullity (kernel) as $\tilde{B}^2$, the idea is to use Hamiltonian simulation of $\tilde{B}$ (i.e., implement $U = e^{i\tilde{B}}$), and use QPE with $\rho_k$ (computed in the first step) as the input state to estimate its eigenvalues. Since $\tilde{B}$ is an $n$-sparse Hermitian with entries $\{0,\pm1\}$, it is claimed that this can be simulated using $O(n)$ qubits and $O(n^2)$ gates~\cite{low2017optimal}.\footnote{The serious issue is that the restricted Dirac operator is not on hand, and requires $\tilde{P}_k$ to be known; see Remark \ref{rem:1}.}

QPE yields an approximate estimate of the eigenvalues of $\Delta_k$. We need to scale $\Delta_k$ such that its spectrum is in the interval $[0,1]$, in order to avoid multiples of $2\pi$; see Section~\ref{sec:analysis} for details on scaling. Supposing the smallest nonzero eigenvalue of (the scaled) $\Delta_k$ is greater than $\delta>0$, we then need to estimate the eigenvalues with a precision of at least $\frac{1}{\delta}$ in order to distinguish an estimated zero eigenvalue from others.
Therefore, the time complexity of this step is $O(\frac{n^2}{\delta})$ and requires as many gates for its implementation. 

This use of QPE provides us with an approximate estimate of some random eigenvalue of $\Delta_k$. For BNE with additive error $\eps$, we need to repeat the two steps $O(\eps^{-2})$ times. Hence, the total time complexity of QTDA (original \lloyd version\footnote{Except that we adjust the cost of QPE under our spectral interval assumption.}) for BNE with $ \left| \chi_k - \frac{\beta_k}{\dim\tH_k}\right| \leq \epsilon$ is given by
\[
O\left(\frac{n^4}{\epsilon^2\delta\sqrt{\zeta_k}}\right).
\]

\begin{remark}[Time Complexity Discrepancy]\label{rem:1}
We note that there is a discrepancy in the total time complexity of the QTDA algorithm reported in~\cite{lloyd2016quantum} and in the subsequent articles~\cite{gunn2019review,gyurik2020towards}, primarily due to differences in the underlying assumptions.
This relates to simulation of the matrix  $\tilde{B}$ or $\Delta_k$, where \lloyd suggest the requirement of constructing and applying the projector $\tilde{P}_k$ at each  round (possibly using Grover's search algorithm, although some implementation details are missing). Hence, the total time complexity in~\cite{lloyd2016quantum} is a product of the time complexity of the two steps. In contrast, the follow-up studies by Gunn  and Kornerup~\cite{gunn2019review} and Gyurik et al.~\cite{gyurik2020towards} assume that we have access to $\tilde{B}$ or $\Delta_k$ as an $n$-sparse matrix, in order to simulate it in the second step, and therefore the time complexities of the two main steps are added in~\cite{gunn2019review,gyurik2020towards} to obtain the total computational time complexity.
\end{remark} 

The subsequent articles by Gunn  and Kornerup~\cite{gunn2019review} and Gyurik et al.~\cite{gyurik2020towards} do not address the issue of efficient quantum construction of $\tilde{B}$ or $\Delta_k$ from the pairwise distances of the $n$ points, and assume that oracle access is given to the nonzero entries of $\tilde{B}$ and their locations.
The next section presents our approach that addresses all of these factors and achieves, for the first time, NISQ-QTDA.


\section{NISQ-QTDA}\label{sec:nisq}
We now present our algorithm for BNE with a near-term (NISQ) implementation. The algorithm introduces three key innovations, namely: (i) an efficient quantum representation of $\Delta_k$ as the sum of Pauli operators; (ii) a quantum rejection sampling approach to substitute Grover's search algorithm; and (iii) a stochastic rank estimation method to replace QPE. 

\subsection{Efficient Representation of $\Delta_k$}\label{ssec:Delrep}

As observed in the previous section, QTDA and BNE require us to simulate the Laplacian $\Delta_k$ (or the Dirac operator $\tilde{B}$). However,  the efficient quantum construction and simulation of $\Delta_k$ from just the pairwise distances of the given points, for all $k\leq n$, is not addressed by the previous work in~\cite{lloyd2016quantum,gunn2019review,gyurik2020towards}. 
We therefore present a quantum representation of the boundary operator and the combinatorial Laplacian that has a NISQ implementation. The representation involves only unitary operators and a sum of Pauli operators~\cite{van2020circuit}, and we show that only $n$ qubits, $O(n^2)$ gates, and an ${O}(n)$-depth circuit are required for its implementation. Note that such a (efficient, quantum) representation of $\Delta_k$ was left as an open problem in~\cite{gyurik2020towards}.

We begin by defining the matrix $a$ (familiar to some as the single spin annihilation operator from second quantization quantum mechanics) as
\begin{eqnarray*}
  a &= &\frac{(\sigma_x + i\sigma_y)}{2}
         \quad = \quad \begin{pmatrix}
           0&1\\
           0&0
           \end{pmatrix},
\end{eqnarray*}
where $a$ is written in terms of the Pauli operators $\sigma_x$ and $\sigma_y$. Note, by inspection or by recognizing
the coefficient of $\sigma_y$ is
imaginary, that both $a$ and the boundary operator below are not Hermitian. 
The full boundary map operator $\partial =  \bigoplus_k \partial_k$
of all possible simplices can now be written in the following manner
\begin{eqnarray*}
  \partial & = & a \otimes I \otimes I \otimes \ldots I \\
           & & + \sigma_z \otimes a \otimes I \otimes \ldots I \\
            & & + \sigma_z \otimes \sigma_z \otimes a \otimes \ldots I \\
            &  &\qquad \qquad  \vdots\\
  & &+ \sigma_z \otimes \sigma_z \otimes \sigma_z \otimes \ldots  \otimes a\\
  &= & \sum_{i=0}^{n-1} a_i ,
\end{eqnarray*}
where the $a_i$ are the Jordan-Wigner~\cite{jordan1928paulische} Pauli embeddings corresponding to the $n$-spin fermionic annihilation operators.

There are $n$ single qubit operations in the tensor product in each of the above terms, one for each vertex in the simplicial complex. The $a$ term ``removes a vertex'' from a simplex and the instances of $\sigma_z$ provide the 
correct
sign in the oriented sum by accumulating as many instances of $-1$ as there are vertices present to the left of the vertex being removed (also known as anti-symmetrizing the wave function). It would be interesting to explore other embeddings such as the Bravyi-Kitaev embedding~\cite{bravyi2002fermionic}.

Next, we transform this non-Hermitian operator into an associated Hermitian operator by adding its complex conjugate as
\begin{align*}
  B  &= \partial^\dagger + \partial = \sum_{i=0}^{n-1} a_i + a_i^\dagger \;\; .
\end{align*}

Since the $\sigma_y$ terms cancel
out,
the matrix $B$ will only have $\sigma_x$ and $\sigma_z$ terms. Since $B$ is expressed as a sum of polynomially-many Pauli terms~\cite{lloyd1996universal,whitfield2011simulation,childs2018toward,gui2020term,van2020circuit},
we can simulate $e^{iB}$ efficiently using the  Trotter-Suzuki  formula~\cite{lloyd1996universal,whitfield2011simulation,childs2018toward}. The depth complexity is usually $O(n^2)$ but, due to 
structure-exploiting
gate cancellations~\cite{van2020circuit}, we fortunately arrived at a depth complexity of ${O}(n)$; see Appendix~\ref{app:BOsim} for the details.

This form operates on the entire $n$-qubit space $\mathbb{C}^{2^n}\cong  \bigoplus_{k=0}^n \mathcal{H}_k$, and is not restricted to any subset of the simplices. If we are interested in the boundary map $ \partial_k:  \mathcal{H}_k \rightarrow {\mathcal{H}}_{k-1}$ operating on (all) $k$-dimensional simplices, then we need to apply  projections to  the operator $B$ in the following manner
\begin{align*}
  \partial_k  &= P_{k-1} B P_{k},
  \end{align*}
where $P_{k}$ is the projection onto all simplices of order $k$ (i.e., to the subspace $\mathcal{H}_k$).

We now define the combinatorial Laplacian corresponding to all the simplices in a given complex $\Gamma$ as
\begin{align}\label{eq:hodge}
  \Delta &= \PG B \PG B \PG,
\end{align}
where $\PG$ is the projector onto all simplices in $\Gamma$. We need to apply the  projection three times because $B$ includes the boundary conjugate, which must also be restricted to the simplices in the complex. Interestingly, if $\PG = I$, that is, if we have all possible simplices  in the complex $\Gamma$, then $ \Delta  = B^2  = nI$. Hence, the kernel is empty and there are no ``holes''.
To see that $B^2$ is proportional to the identity, it suffices to observe that the simplices produced by going up and then down in simplicial order are the opposite signs of those produced
by going down and then up, except for the simplex being acted upon which gets reproduced $n$ times.

Finally, the combinatorial Laplacian of $k$-simplices  $\Delta_k$ is then given by
\begin{equation}\label{eq:Delk}
   \Delta_k = P_k \Delta P_k, 
\end{equation}
where $P_k$ is a projector onto all $k$-simplices, and we know that the $k$-Betti number is given by $\beta_k = \dim\ker(\Delta_k)$.
We can simulate the Laplacian $\Delta_k$ efficiently on a quantum computer, because
the matrix $B$ can be simulated
effectively and the projectors $\PG$ and $P_k$ can be implemented with efficient unitary operators followed by measurement (see the next section).
We next describe how to construct the projectors $\PG$ and $P_k$.

\subsection{Projection onto Simplices}\label{ssec:projection}
The second key aspect of our approach consists of constructing the Vietoris-Rips simplicial complex $\Gamma$ from the pairwise distances of the points and  computing the projectors $\PG$ and $P_k$.
We now present a quantum sampling and control-based projection approach, which replaces the Grover's search algorithm in~\cite{lloyd2016quantum}, to project onto the simplices that exist in the complex $\Gamma$ and a separate projection onto simplices of order $k$.

\paragraph{Simplicial complex construction:}
\lloyd propose using Grover's search with an indicator function that returns one when all pairs of vertices of the given simplex are pairwise $\veps$-close. A QRAM version of this calculation was also suggested, which allows for quantum parallelism of the distance calculations and over multiple $\veps$ values. While Grover's search algorithm is probably provably optimal 
for complex projection\footnote{Using the generic theorem that Grover's search is optimal for unstructured search}, the proposed approach requires QRAM or at least fault-tolerance, both of which have not yet been achieved on currently available quantum devices. We therefore devise an alternative approach that is NISQ compatible.

To start, assume that we have access to all pairwise distances of the data-points (classically pre-calculated), followed by classical encoding of the $\varepsilon$-close pairs (known as the Vietoris-Rips 1-skeleton or the adjacency graph of the complex), where $\varepsilon$ is the resolution/grouping scale. We then systematically entangle the $k$-simplices with an $n/2$-qubit flag register. The $n/2$ qubits are used to process $n/2$ pairs of vertices at a time in $n-1$ rounds, thereby covering all $n\choose2$ potential $\varepsilon$-close pairs of vertices.

More precisely, the projection begins by creating a uniform superposition over all simplices (or over $k$-simplices, if the simplex order projection is executed first). Since the adjacency graph of the complex is already given, it remains only to keep a simplex for which every possible pair of vertices, for vertices of that simplex, is $\varepsilon$-close by checking the adjacency graph. We check $n/2$ pairs at a time, for all simplices in the superposition (hence the 
quantum speedup). The $n/2$ pairs are chosen such that the control gates may be executed in parallel. The actual gate is a C-C-NOT (Toffoli gate),
controlling the chosen pair of vertex qubits into the flag register. A particular gate for a pair is added to the circuit only if that pair of vertices is \emph{NOT} in the adjacency graph. For a given pair, if they are not $\varepsilon$-close, then all simplices containing this pair are not in the complex.\footnote{For a 1-skeleton with many more pairs of vertices that are in the adjacency matrix than those that are not in the complex, 
to save on Toffoli gates, we could alternatively code for simplices in rather than not in, but this approach needs to exclude the vertices absent from simplices, which can be achieved with a round of NOTs and CNOTs.}

We check for all $n\choose2$ pairs employing a cyclic shift approach with $n-1$ such rounds. In each round, we measure the flag register and proceed only if we receive all zeros. This collapses the simplex superposition into those simplices that have pairs which are not missing from the adjacency graph. Mid-circuit measurements allow us to reuse a common $n/2$-qubit flag register in each round. To store the results, we 
need $n-1$ measure-and-reset operations along with  $n\choose2$ classical registers.

Supposing $\zeta_k$ is the fraction of all possible simplices of order $k$ that are in the given complex $\Gamma$, then the collapse succeeds with probability $\frac{1}{\zeta_k}$. The order $k$ depends on the projector $P_k$ and the Laplacian $\Delta_k$ we consider. We therefore repeat this procedure $\frac{1}{\zeta_k}$ times.
While this procedure is quadratically less efficient
than Grover's search algorithm, which succeeds in identifying  all simplices in the complex when run for $\frac{1}{\sqrt{\zeta_k}}$ time,
our method does not require QRAM or fault-tolerance, and is NISQ compatible. The number of gates required for our procedure 
is $O(n^2 \bar{\zeta}_{k})$ with $\bar{\zeta}_{k} :=\min\{1-\zeta_k,\zeta_k\}$, whereas the depth of this circuit is only ${O}(n)$ since $n/2$ of these gates are in parallel. 


\paragraph{Projection onto $k$-simplices:}
Now, we describe the construction of the projector onto the $k$-simplex subspace. 
Our approach begins with the construction of a circuit that conditionally implements a count increment (i.e., $+1$). We condition on each qubit of the $n$-simplex register to increment a $\log(n)$-sized count register. This entangles the simplex register with the count register in such a way as to have each simplex entangled with the binary representation of its order. We then measure the count register and obtain a specific simplex order with some probability, collapsing the simplex register into a superposition of all simplices of that order only.
These steps involve an additional depth complexity of $O(\log^2 n)$, which maintains the $O(n)$-depth complexity for the circuit of our overall procedure.

\subsection{Stochastic Rank Estimation}
\label{ssec:chebyshev}

The final key ingredient of our proposed NISQ-QTDA algorithm is a stochastic rank estimation procedure that estimates the Betti number $\beta_k$ by estimating the rank of $\Delta_k$, which replaces the QPE component of the \lloyd algorithm.
The standard approach to estimate the rank of a square matrix is to compute all of its  
eigenvalues and count the number of nonzero eigenvalues, for which prior work~\cite{lloyd2016quantum} has employed QPE. In this paper, we propose a rank estimation procedure that does not require any decomposition of the corresponding matrix. 
In particular, our rank estimation approach is based on the classical \emph{stochastic Chebyshev} method~\cite{ubaru2016fast,ubaru2017fast}.
Namely, the proposed approach recasts the rank estimation problem to one of  estimating the trace of a certain (step) function of the matrix. The trace is then approximated by a stochastic trace estimator, where the step function is approximated 
by a Chebyshev polynomial approximation.

\paragraph{Stochastic trace estimator:}
Given a Hermitian matrix $A\in\RR^{N\times N}$, the stochastic trace estimation method~\cite{hutchinson1990stochastic,avron2011randomized} uses only the moments of the matrix to approximate the trace.  In the classical setting, ${\trace (A)}$ is estimated  by first generating random vector states $|v_l\rangle$  with random independent and identically distributed (i.i.d.) entries, $l=1,..,\nv$,
and then computing the average over the moments $\langle v_l | A | v_l\rangle $; namely,
\begin{eqnarray}
\label{eq:trace_estimator}
\trace (A)  \approx \frac{1}{\nv}  \sum_{l=1}^{\nv} \langle v_l | A | v_l\rangle.
\end{eqnarray}
Any random vectors $|v_l\rangle$ with zero mean and uncorrelated coordinates can be used~\cite{avron2011randomized}.

In the quantum setting, however, particularly with NISQ computations, generating random states $|v_l\rangle$ of exponential size with i.i.d.\ entries is not viable. 
Alternatively, it has  been shown that random columns drawn from the Hadamard matrix work very well in practice for stochastic  trace estimation~\cite{fika2017stochastic}.
Sampling a random Hadamard state vector in a quantum computer is extremely simple and can be conducted with short-depth circuit. Given an initial state $\ket{0}$, we randomly flip the $n$ qubits (possibly by applying a NOT gate as determined by a random $n$-bit binary number $\in [0,2^n-1]$ generated classically). Thereafter, we simply apply Hadamard gates to all qubits. This produces a state corresponding to a random column of the $2^n \times 2^n$ Hadamard matrix.

The columns of a Hadamard matrix have pairwise independent entries.
Hence, we consider the random state vector $| v_l\rangle = | h_{c(l)}\rangle$, i.e., some random Hadamard column with $c(l)$ defining the random index, and then we estimate the moments
$\langle h_{c(l)} | A | h_{c(l)}\rangle$ and average over the $\nv$ samples to approximate the trace. The error analysis for this approach is presented in Appendix~\ref{app:proof}.

Alternatively, quantum t-design circuits are a popular way to generate pseudo-random states~\cite{ambainis2007quantum,ji2018pseudorandom,brakerski2019pseudo}. A t-design circuit outputs a state that is indistinguishable from states drawn from a random Haar measure. These t-designs in a quantum computer are equivalent to $t$-wise independent vectors in the classical world~\cite{ambainis2007quantum}. Short-depth circuits exist (though not as short as above) that are approximate $t$-designs~\cite{brakerski2019pseudo}. Such $t$-design circuits can be used to generate the random states $\ket{v_l}$  for  trace estimation. Indeed, random vectors with just $4$-wise independent entries suffice for trace estimation (we omit the details here because this approach is less competitive than the above super-short-depth Hadamard construction).

\paragraph{Chebyshev approximation:}
Assuming the smallest nonzero eigenvalue of $A$ is greater than or equal to $\delta$, then the rank of $A$ can be written as
\begin{equation}
\label{eq:hoft} 
\rank(A) \defeq \trace(h(A)), \ \mbox{where} \ h(x) = \left\{\begin{array}{l l}
1  & \ \textrm{if}\ \ x \ > \delta\\
0  & \ \textrm{otherwise}\\
\end{array} \;. \right.
\end{equation}
Given the eigen-decomposition $A =\sum_{i}\lambda_i|u_i\rangle\langle u_i|$, we have the matrix function $h(A)= \sum_{i} h(\lambda_i)|u_i\rangle\langle u_i|$ where the step function $h(\cdot)$ takes a value of $1$ above the threshold $\delta>0$. The parameter $\delta$ is assumed to be known  (or, in the classical setting, can be estimated using the spectral density method~\cite{ubaru2016fast}).
In the case of TDA,  for many  simplicial-complex types,  a lower bound for the smallest nonzero eigenvalue of $\Delta_k$ can be estimated; refer to Section~\ref{ssec:Qadv} for a few examples.

Next, the approach of Ubaru et al.~\cite{ubaru2016fast,ubaru2017fast} consists of approximating the matrix function $h(A)$ by employing Chebyshev polynomials~\cite{trefethen2019approximation}, and estimating the trace using the stochastic
estimator~\eqref{eq:trace_estimator}.
More specifically, $h(A)$ is approximately expanded in the following manner
\begin{equation}\label{eq:Chebyshev}
 h(A) \approx \sum_{j=0}^mc_jT_j(A),
\end{equation}
where  $T_j$ is the $j$th-degree Chebyshev polynomial
of  the first kind and $c_j$ are the coefficients; see Appendix~\ref{app:chebyshev} for further details on Chebyshev polynomials and coefficient computation.
Therefore, the rank of a given matrix $A$, with the smallest nonzero eigenvalue greater than or equal to $\delta$, can be approximately estimated using the stochastic Chebyshev method~\cite{ubaru2016fast,ubaru2017fast} as
\begin{equation}\label{eq:rank} 
\rank(A)\approx
\frac{1}{\nv}\sum_{l=1}^{\nv}\left[\sum_{j=0}^m c_j\langle v_l|T_j(A)|v_l\rangle\right].
\end{equation}

The method estimates the rank using only the Chebyshev moments of the matrix $\langle v_l|T_j(A)|v_l\rangle$.
Classically, these moments are typically built using the three-term recurrence~\cite{ubaru2016fast}. Although we cannot employ the recurrence approach with quantum computers, we can compute the moments $\bra{v_l}A^i \ket{v_l}$ for a given order $i$. We therefore  use a general (summation) formula given by
 \begin{equation}\label{eq:Cheb_sum}
 T_j(x) = \sum_{i = 0}^{\lfloor \frac{j}{2}\rfloor}(-1)^i 2^{j-(2i+1)} g(j,i) x^{j-2i},
 \end{equation}
where 
\[
g(j,i) = \frac{\binom{2i}{i}\binom{j}{2i}}{\binom{j-1}{i}},
\]
 to compute the Chebyshev moments $\langle v_l|T_j(A)|v_l\rangle$ using the moments $\{ \bra{v_l}A^i \ket{v_l}\}_{i=0}^j$. Details on computing these moments for the Laplacian $\Delta_k$ are given in Appendix~\ref{app:moments}.

We now have all the ingredients to present our NISQ-QTDA algorithm as follows.

 \begin{algorithm}[H]
\caption{NISQ-QTDA Algorithm}
\label{alg:algo1}
\begin{algorithmic}
   \STATE {\bfseries Input:} Pairwise  distances of $n$ data points and encoding of the $\veps$-close pairs; parameters $\eps,\delta$, and $\nv=O(\eps^{-2})$; and $\nv$ $n$-bit random binary numbers.
 \STATE {\bfseries Output:} Betti number estimates $\chi_k, \: k=0,\ldots, n-1$.
 \FOR{$l=1,\ldots, \nv=O(\eps^{-2})$}
 \FOR{$i=0,\ldots,m= O(\log(1/(\delta\eps)))$}
 \STATE  {\bfseries  1.} Prepare a random Hadamard  state vector $\ket{v_l}$ from $\ket{0}$ using the $l$-th random number.
\STATE  {\bfseries  2.} Apply $P_k$, apply $\PG$, and simulate $B$ to obtain \\ \quad~$\ket{\phi_l^{(i)}} = \left(\prod_{j=0}^{j=i-1} \PG P_k^{j  \textrm{\%} 2} B\right) \PG P_k \ket{v_l}$;\footnotemark[11] \\ \quad~Save the value of $\dim \tH_k$ the first time by leaving out $\PG P_k^{j  \textrm{\%} 2} B$ (i.e., $i=0$).
\STATE  {\bfseries  3.} Compute the moments $\mu_l^{(i)} = \bra{v_l}\Delta_k^i \ket{v_l} $ as $ \mu_l^{(i)}= \left\langle\phi_l^{(i)}\bigg|\phi_l^{(i)}\right\rangle$  or $= \left\|\ket{\phi_l^{(i)}}\right\|^2$; \\
\quad~Readout the $\mu_l^{(i)}$ and compute the Chebyshev moments $\theta_l^{(j)} = \bra{v_l}T_j(\Delta_k) \ket{v_l} $ using~\eqref{eq:Cheb_sum}.
\ENDFOR
\ENDFOR
\STATE Estimate $\chi_k = 1-\frac{1}{\nv\dim \tH_k}\sum_{l=1}^{\nv}\left[\sum_{j=0}^m c_j\theta_l^{(j)}\right]$.
\STATE \emph{Repeat} for $k=0,\ldots, n-1$.
\end{algorithmic}
\end{algorithm} 
\footnotetext[11]{The modulo operator is denoted by $\%$, such that $\% 2$ has the effect of alternating between excluding and including $P_k$. The state is written in non-normalized form, which is equivalent to collecting statistics through post-selection.}

The next section presents our analysis of the error and the computational complexities of the above NISQ-QTDA algorithm.

\section{Theoretical Analyses}\label{sec:analysis}
We turn to the theoretical analysis of our proposed NISQ-QTDA algorithm, first presenting an error analysis that provides bounds on the number of random vectors $\nv$ and the polynomial degree $m$ needed to achieve BNE  with $ \left| \chi_k - \frac{\beta_k}{\dim\tH_k}\right| \leq \epsilon$, and subsequently presenting the gate and time complexities of the algorithm.
We then discuss different scenarios under which the QTDA algorithms can achieve significant speedups over classical algorithms, including when our proposed algorithm can be NISQ implementable.

\subsection{Error Analysis}\label{ssec:Err}

Algorithm~\ref{alg:algo1} returns a Betti number estimate $\chi_k$ for each order $k=0,\ldots,n-1$. We show that, for the appropriate choice of $m$ and $\nv$, this
estimate is a BNE with an additive error $\eps\in(0,1)$. Our main result is presented 
in the following theorem.

\begin{theorem}\label{theo:main}
Assume we are given the pairwise distances of any $n$ data points and the encoding of the corresponding $\veps$-close pairs, together with an integer $0\leq k\leq n-1$ and the parameters $(\epsilon,\delta,\eta)\in (0,1)$.
Further assume the smallest nonzero eigenvalue of the scaled Laplacian $\tilde{\Delta}_k$ is greater than or equal to $\delta$, and choose $\nv$ and $m$ such that
\[
\nv = O\left(\frac{\log(2/\eta)}{\epsilon^2}\right)
\qquad\qquad \mbox{and} \qquad\qquad
m \geq 
 \frac{\log\left(\frac{32\log(2/\epsilon)}{\pi\delta\epsilon}\right)}{\log\left(1+\frac{\pi\delta}{4\log(2/\epsilon)}\right)}.
\]
Then, the Betti number estimation $\chi_k \in[0,1]$ by Algorithm~\ref{alg:algo1} satisfies
 \[
  \left| \chi_k - \frac{\beta_k}{\dim\tH_k}\right| \leq \epsilon,
  \]
with probability at least $1-\eta$ .
\end{theorem}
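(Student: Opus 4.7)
The plan is to decompose the total error $|\chi_k - \beta_k/\dim\tH_k|$ into an approximation-theoretic part and a probabilistic concentration part, then bound each separately so their sum is at most $\epsilon$ with the stated failure probability. Writing $p_m(x) = \sum_{j=0}^m c_j T_j(x)$ for the Chebyshev truncation of the step function $h$, and observing that $\beta_k = \dim\tH_k - \rank(\Delta_k) = \dim\tH_k - \trace(h(\tilde\Delta_k))$, the quantity the algorithm actually returns equals $1 - (\dim\tH_k)^{-1}\, \hat{T}_m$, where $\hat{T}_m = \nv^{-1}\sum_{l=1}^{\nv}\langle v_l|p_m(\tilde\Delta_k)|v_l\rangle$. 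A triangle inequality then splits the error into $|\hat{T}_m - \trace(p_m(\tilde\Delta_k))|$ (the stochastic trace error) and $|\trace(p_m(\tilde\Delta_k)) - \trace(h(\tilde\Delta_k))|$ (the Chebyshev truncation error), each normalized by $\dim\tH_k$, and I would aim to make each at most $\epsilon/2$.

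For the Chebyshev truncation error, I would invoke the standard uniform approximation bounds for analytic approximants of a step function whose jump lies at $\delta$ inside $[0,1]$, exactly as developed in Ubaru et al.~\cite{ubaru2016fast,ubaru2017fast}. Because $\tilde\Delta_k$ has spectrum in $\{0\}\cup[\delta,1]$, one only needs $|p_m(\lambda)-h(\lambda)|$ to be small on this set, not on the jump interval. Using a contour/Joukowski-style bound for the Chebyshev coefficients of a smoothed step and summing the tail yields an inequality of the form $\sup_{\lambda\in\{0\}\cup[\delta,1]}|p_m(\lambda)-h(\lambda)| \lesssim \rho^{-m}\log(1/\epsilon)/\delta$ with $\rho = 1 + \Theta(\delta/\log(1/\epsilon))$; after multiplying by $\dim\tH_k$ (absorbed by the normalization) and solving $\rho^{-m}\le O(\epsilon\delta/\log(1/\epsilon))$ for $m$, one obtains precisely the stated degree $m \ge \log(32\log(2/\epsilon)/(\pi\delta\epsilon))/\log(1+\pi\delta/(4\log(2/\epsilon)))$. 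This is the step whose constants must be tracked carefully, and it is where most of the technical work would lie.

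For the stochastic trace part, I would exploit that $p_m(\tilde\Delta_k)$ is Hermitian with operator norm at most $1+O(\epsilon)$ after the truncation bound, so each sample $X_l := \langle v_l|p_m(\tilde\Delta_k)|v_l\rangle$ is a bounded random variable with mean $\trace(p_m(\tilde\Delta_k))/\dim\tH_k$ (using that a random Hadamard column has unit norm and its coordinate-squared distribution is uniform with pairwise-independent signs, as detailed in Appendix~\ref{app:proof}). A Hoeffding-type concentration inequality, or the variance bound from~\cite{avron2011randomized,fika2017stochastic} for sign vectors combined with Chebyshev's inequality, then gives $\Pr[|\hat{T}_m/\dim\tH_k - \trace(p_m(\tilde\Delta_k))/\dim\tH_k|>\epsilon/2]\le\eta$ as soon as $\nv = O(\log(2/\eta)/\epsilon^2)$.

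Combining the two bounds via a union bound yields the claimed $|\chi_k - \beta_k/\dim\tH_k|\le\epsilon$ with probability at least $1-\eta$. The main obstacle I anticipate is the Chebyshev step: one must verify that the pre-specified coefficients $c_j$ (coming from the expansion of a specific smoothed step, see Appendix~\ref{app:chebyshev}) produce the exact constants in the stated bound on $m$, including the $\pi/4$ and $32$ factors. A secondary technical point is that the Hadamard-vector concentration only uses pairwise independence, so one must rely on a variance/Chebyshev argument rather than a sub-Gaussian Hoeffding bound, which is why the sample complexity depends on $\log(2/\eta)$ only through the mild boost obtainable by a median-of-means reduction over $O(\log(1/\eta))$ independent batches of $O(1/\epsilon^2)$ Hadamard draws.
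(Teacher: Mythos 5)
Your proposal follows essentially the same route as the paper's proof: the same triangle-inequality decomposition into a Chebyshev-truncation term and a stochastic-trace term, the same smoothed-step surrogate with a Bernstein-ellipse tail bound (the paper uses $f(x)=\frac{1}{2}\left(1+\tanh(\alpha(x-\frac{\delta}{2}))\right)$ with $\alpha=\frac{1}{\delta}\log(2/\epsilon)$ and Theorem~1 of Han et al.\ to produce exactly the stated constants), and the same random-Hadamard-column trace estimator. One small correction: the paper obtains the $\log(2/\eta)$ dependence directly from Hoeffding, because the $\nv$ samples are i.i.d.\ uniform draws of the diagonal of $HAH^{T}$ and hence bounded i.i.d.\ random variables (Theorem~16 of Avron--Toledo), so the pairwise-independence worry and the median-of-means reduction you describe are unnecessary; the paper also separately argues that the Trotter-simulation and moment-estimation errors are additive and negligible, a step your outline omits.
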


The proof of Theorem~\ref{theo:main} is given in Appendix~\ref{app:proof}. We assume the Laplacian $\Delta_k$ has eigenvalues in the interval $[0,1]$. Since the largest eigenvalue is bounded by $O(k)$~\cite{gunn2019review}, we can scale $\Delta_k$ by $O(1/k)$ to ensure the spectrum of the scaled $\tilde{\Delta}_k$ is in $[0,1]$. Note that QPE also requires this scaling. 
Next, since the step function is a discontinuous function, we consider
in the proof of the above theorem an analytic surrogate function that approximates the step function, and then use a Chebyshev approximation of this surrogate function. In particular, we consider that the hyperbolic tangent function
$f(x) = \frac{1}{2}\left(1+\tanh(\alpha(x - \frac{\delta}{2}))\right)$, with $\alpha  = \frac{1}{\delta}\log(\frac{2}{\epsilon})$, approximates the step function well in the region of interest. We assume that the smallest nonzero eigenvalue of the (scaled) combinatorial Laplacian $\Delta_k$ is greater than or equal to $\delta$, as assumed in the theorem.
The parameter $\delta$ is assumed to be given, as previously noted, and can be estimated for many types of simplicial-complexes; see the discussion in the next section.
The above theorem also accounts for the errors in the estimation of the moments $\mu_l^{(i)} = \bra{v_l}\Delta_k^i \ket{v_l} $ using the quantum computer, i.e., the errors in the simulation of $e^{iBt}$ and moment estimation. These errors are additive in nature and will be small (see the Appendix for details), noting that a choice of small $t\ll 1$ suffices, and the errors become negligible when we scale the estimates by $\frac{1}{\dim\tH_k}$.

\subsection{Complexity Analysis}\label{ssec:complexity}
We now discuss the circuit and computational complexities of our proposed algorithm and show that it is NISQ implementable under certain conditions, such as clique-dense complexes which commonly occur for large resolution scale and high order $k$. 
The main quantum component of our algorithm comprises the computation of the moments $\mu_l^{(i)} = \bra{v_l}\Delta_k^i \ket{v_l}$, for $i=0,\ldots,m= O(\log(1/(\delta\eps)))$, and the computation of the $\nv=O(\eps^{-2})$ random Hadamard vectors. The random Hadamard state preparation requires $n$ single-qubit Hadamard gates in parallel and $O(1)$ time.

Next, for a given $k$, constructing the combinatorial Laplacian $\Delta_k$ involves simulating the boundary operator $B$ and constructing the projectors  $\PG$ and $P_k$.
The operator $B$, involving the sum of $n$ Pauli operators, can be simulated as $e^{iB}$ using a circuit with $1$ ancillary qubit and  $O(n)$ gates. Hence, the time complexity will also be $O(n)$. For the projectors, constructing $P_k$ requires $O(n\log^2n)$ gates, and this succeeds for a random order $k$. Then for $\PG$, we need to  find all the simplices that are in the complex $\Gamma$. This is achieved using $n/2$ qubits in parallel and  $n-1$ operations, and thus the time complexity remains $O(n)$. The number of gates required will be $O(n^2 \bar{\zeta}_{k})$, recalling $\bar{\zeta}_{k} :=\min\{1-\zeta_k,\zeta_k\}$.  The procedure of applying the projectors succeeds with probability $1/\zeta_k$, and we obtain the projection onto all simplicies of order $k$ that is in $\Gamma$, for a random $k$.
Hence, the projectors together require $O(n^2)$ gates, while the depth remains $O(n)$, and the time complexity for a projection will be $O\left(\frac{n}{\zeta_k}\right)$, since  $P_\Gamma$ succeeds when run $O(\zeta_k^{-1})$ times.

For higher moments, we need to construct  $\Delta_k^i$ up to the power $m= O(\log(1/(\delta\eps)))$. Therefore, the circuit has a total gate complexity of 
$O(n^2\log(1/(\delta\eps)))$ with a depth of $O(n\log(1/(\delta\eps)))$.
In order to compute $\ket{\phi_l^{(i)}}$ for a given degree $(i)$, we need all $2 \cdot (i)$ applications of $\PG$ to succeed simultaneously, and therefore we need to run the projections $O(\zeta_k^{-2(i)})$ times. The first projection $P_k$ yields a random order $k$, and the subsequent projections onto a simplicial order needed for higher moments will also have to be onto the same order $k$. Due to the application of the boundary operator $B$, the subsequent simplicial order projections will result in a projection onto one of the simplicial orders $k-1$ or $k+1$ (after one application) and $k-2$, $k$ or $k+2$ (after two applications). Hence, we need to repeatedly apply the order projection (a constant number of times) in order to ensure that we are operating on the right order (in addition to the complex projection).
The procedure of computing the $m$ moments is repeated $\nv=O(\eps^{-2})$ times with different random Hadamard column vectors, and thus the total time complexity of our algorithm to compute the BNE $\chi_k$ is given by
\[
O\left(\frac{n\log(1/(\delta\epsilon))}{\epsilon^2\zeta_k^{2\log(1/(\delta\epsilon))}}\right).
\]

Note that we consider above a simple approach of repeating the projection until it succeeds each time, since this requires a short-depth circuit and results in a $\zeta_k^{-2\log(1/(\delta\epsilon))}$ term in the time  complexity.
Perhaps, this procedure can be improved to reduce the complexity further. We consider clique-dense complexes (i.e., large $\zeta_k$, as discussed in the next section), and this term is a constant for a given set $\{\eps,\delta,\zeta_k\}$ and is independent of $n$.

\subsection{Quantum Advantage}\label{ssec:Qadv}
Table~\ref{tab:comp} summarizes the circuit and computational complexities of our algorithm and compares them to that for the QTDA algorithm of~Lloyd et al.~\cite{lloyd2016quantum}. As remarked earlier, the gate and time complexities for this QTDA algorithm reported in Gyurik et al.~\cite{gyurik2020towards} and Gunn \& Kornerup~\cite{gunn2019review} are different from those reported in~Lloyd et al.~\cite{lloyd2016quantum}, since Gyurik et al.\ and Gunn \& Kornerup both assume the operator $\tilde{B}$ is given and thus they add the complexities of the two steps (Grover's algorithm and QPE); refer to Remark~\ref{rem:1} above.

\begin{table}[htb]

 \caption{Comparisons of the circuit and computational complexities for QTDA to compute BNE with an $\epsilon$ error, a $\zeta_k$ fraction of order-$k$ simplices in the complex, and a $\delta$ smallest nonzero eigenvalue of 
$\tilde{\Delta}_k$.}\label{tab:comp}
 \begin{center} 
 \begin{tabular}{|l|c|c|c|c|}
\hline
Methods & \# Qubits & \# Gates & Depth & Time \\
\hline
\lloyd & $2n+\log n+\frac{1}{\delta}$&  $O\left(\frac{n^2}{\delta\sqrt{\zeta_k}}\right)$ & $O\left(\frac{n^2}{\delta\sqrt{\zeta_k}}\right)$ & $O\left(\frac{n^4}{\epsilon^2\delta\sqrt{\zeta_k}}\right)$\\
Ours & $n+\log(n)$& $O(n^2\log(1/(\delta\epsilon)))$& $O(n\log(1/(\delta\epsilon)))$&$O\left(\frac{n\log(1/(\delta\epsilon))}{\epsilon^2\zeta_k^{2\log(1/(\delta\epsilon))}}\right)$ \\
\hline
\end{tabular}
 \end{center} 
\end{table}

For comparison, note that the best-known classical algorithm for BNE of order $k$ has a time complexity of $O({n \choose k})^2$~\cite{lloyd2016quantum} or $O(\mathrm{poly}(n^k))$~\cite{gyurik2020towards}.
Therefore, the QTDA algorithms can achieve exponential speedups over the best-known classical algorithms whenever we have: 
\begin{itemize}
    \item {\bf Simplices/Clique dense complexes}~--~the given complex $\Gamma$ is simplices/clique dense, i.e., $\zeta_k$ is large or $|S_k|\in O(\mathrm{poly} (n))$; and 
    \item  {\bf Large spectral gap}~--~the spectral gap between zero and nonzero eigenvalues of $\Delta_k$ is large, i.e., $\delta$ of $\tilde{\Delta}_k$ is not too small.
\end{itemize}
More importantly, our proposed Algorithm~\ref{alg:algo1} is \emph{NISQ} implementable whenever the Laplacian spectral gap is large, representing the only algorithm that is able to do so.

\paragraph{Simplices/Clique dense complexes:}
We first discuss examples of complexes that are simplices/clique dense. 
Gyurik et al.~\cite{gyurik2020towards} presented a few examples of a family of graphs that are clique-dense. Using the clique-density theorem~\cite{reiher2016clique}, we can consider a class of graphs/complexes that are clique-dense. Let $\gamma> \frac{k-2}{2(k-1)}$ be a constant.
Then, for a graph with $n$ nodes and $\gamma n^2$ edges and for a given order $k \geq 3$, we have $|S_k| = \Omega(n^{k+1})$ by the clique-density theorem~\cite{reiher2016clique}. If $\gamma \geq \frac{k-1}{k}$, then the graph will be even denser. Such clique-dense complexes occur in TDA when the resolution scale $\veps$ is large (close to maximum distance between points), and therefore QTDA algorithms can achieve a significant speedup for BNE over classical algorithms, particularly when we are interested in larger (and many) orders of $k$. We also refer to the discussions in~\cite{lloyd2016quantum, gyurik2020towards} on when quantum TDA algorithms are advantageous.

\paragraph{Laplacian spectral gap:}
We next discuss different settings, namely when the Laplacian of a given simplicial complex has a sufficiently large spectral gap such that a small degree $m$ will suffice for BNE.
Not much is known for general simplicial complexes in terms of lower bounds for $\delta$, the smallest nonzero eigenvalue of the combinatorial Laplacians~\cite{gyurik2020towards}. However, we can identify many specific examples of simplicial complexes for which  $\delta$ can be large. Indeed, several articles~\cite{goldberg2002combinatorial,horak2013spectra, yamada2019spectrum, lew2020spectral, lew2020spectral2} have studied  the  spectra of the Laplacian of different simplicial complexes, including random simplicial complexes~\cite{gundert2016eigenvalues,kahle2016random,knowles2017eigenvalue,adhikari2020spectrum,beit2020spectral}.

{\it Some specific complexes:} First, let us consider a few specific types of simplicial complexes. The articles by Horak and Jost~\cite{horak2013spectra} and Yamada~\cite{yamada2019spectrum} consider the Laplacian spectra of $k$-regular complexes and orientable complexes. A simplicial complex $\Gamma$ is $k$-regular if and only if all of its $k$-faces have the same degree $d_k$, whereas a $k+1$-dimensional simplicial complex $\Gamma$ is \emph{orientable} if and only if all $k$-faces of $\Gamma$ have orientation such that any two simplices which intersect on a $(k-1)$-face induce a different orientation on that face. For $k$-regular simplicial complexes with degree $d_k=1$, the Laplacian $\Delta_{k}$ has all nonzero eigenvalues equal to $k+2$. 
Horak and Jost~\cite{horak2013spectra} show similar results for higher degree and for orientable $k$-dimensional simplicial complexes.
Yamada~\cite{ yamada2019spectrum} presents lower bounds on the nonzero eigenvalues of the Laplacian for these two types of complexes in terms of the Ricci curvature~\cite{bauer2011ollivier} of the complex. For an orientable $k$-dimensional  simplicial complex $\Gamma$ with maximum degree $d_k$ for the $(k-1)$-faces, the smallest nonzero eigenvalues of $\Delta_k$, denoted by $\delta_k$, satisfies
\[
\delta_k \geq (k+1)(\kappa_c -1)+\frac{2}{d_k},
\]
where $\kappa_c$ is the Ricci curvature on $\Gamma$. If the complex is orientable $k$-regular, then the minimal eigenvalues of $\Delta_k$ satisfies $\delta_k \geq (k+1)\kappa_c$. We refer to~\cite{yamada2019spectrum} for bounds on the Ricci curvature $\kappa_c$ for $k$-regular complexes. 
Such complexes therefore can have a large spectral gap between zero and nonzero eigenvalues (i.e., large $\delta$ for the scaled Laplacian) when $k$ is sufficiently large.

Next, the article by Goldberg~\cite{goldberg2002combinatorial} considers the Laplacian spectra of a few specific complexes.  
For a finite simplicial complex $\Gamma$ that contains distinct flapoid clusters of size $d_c$, the nonzero eigenvalues of the Laplacian are all equal to $d_c = o(n)$. The article by Lew~\cite{lew2020spectral2} presents a lower bound for the spectral gap of the $k$-Laplacian $\Delta_k$ for complexes without missing faces.
In particular, for an $n$-vertex simplicial complex $\Gamma$ without missing faces of dimension larger than $\ell$, the smallest nonzero eigenvalue (spectral gap) of $\Delta_k$, for $k\leq \ell$, satisfies
\[
\delta_k \geq (\ell+1)(d_k +k+1)- \ell n,
\]
where $d_k$ is the minimal degree of a $k$-simplex in $\Gamma$. 
These complexes therefore can also have a large spectral gap, under appropriate conditions. 

{\it Random complexes:} Let us now consider random simplicial complexes. For a random complex $\Gamma$ with $n$ vertices and constants $C_1,C_2$ and $p\geq (k+C_1)\log(n)/n$, Gundert and Wagner~\cite{gundert2016eigenvalues} show that, if the expected degree of $k-1$ faces is $d_{k}:=p(n-k)$, then the normalized Laplacian\footnote[12]{The normalized Laplacian is defined as $\hat{\Delta}_{k} := D_k^{-1}\Delta_k$, where $D_k$ is the diagonal matrix with the degrees of the faces as its diagonal entries.} $\hat{\Delta}_{k}$ has all its nonzero eigenvalues in the interval 
\[
\left[1-\frac{C_2}{\sqrt{d_{k}}},1+\frac{C_2}{\sqrt{d_{k}}}\right],
\]
with high probability.
It was recently shown by Adhikari et al.~\cite{adhikari2020spectrum} that, for random dense graphs/complexes, the limiting spectral gap (between zero and nonzero eigenvalues) of the normalized Laplacian approaches $1/2$.
Another interesting and relevant result related to the spectra of random complexes was obtained by Beit-Aharon and Meshulam~\cite{beit2020spectral}, who consider random subset complexes.
Suppose $\Gamma$ is a full complex (also called a homological sphere) with all possible simplices of order up to $n-1$, i.e., an $n$-simplex. 
Let $\tilde{G}$ be a random subset of $\Gamma$, $\tilde{G}\subset \Gamma$, of size $\tilde{n}$ and let $\delta_k$ be the minimal (smallest nonzero) eigenvalue of the $k$-Laplacian $\Delta_k$ of $\tilde{G}$ for $k<n$. Then, for $k\geq 1$ and $\xi>0$, if the size $\tilde{n} = \lceil \frac{4k^2\log n}{\xi^2} \rceil$, we have~\cite{beit2020spectral} 
\[
\pr\left[  \delta_k < (1-\xi) \tilde{n}\right] \leq O\left(\frac{1}{n}\right).
\]
These results therefore suggest that random dense complexes will likely  have a large spectral gap between zero and nonzero eigenvalues. Indeed, this is exactly the regime (large $\zeta_k$) where the quantum algorithms are advantageous. As discussed by Lloyd~\cite{lloyd1996universal}, such dense complexes occur in TDA when the resolution scale $\veps$ is large. For such complexes, our proposed QTDA algorithm has great prospects to be NISQ implementable.

\paragraph{Approximate BNE:} When the spectral gap of the Laplacian $\tilde{\Delta}_k$ is not larger than the chosen threshold $\delta$, our NISQ-QTDA algorithm estimates an approximate Betti number by counting the (larger) eigenvalues above the threshold $\delta$. This was defined in~\cite{gyurik2020towards} as the problem of approximate Betti number estimation (ABNE). Such ABNE will be useful in certain situations, since our method provides an approach to filter out small (noisy) eigenvalues and only consider larger (dominant) eigenvalues for estimating the Betti number. These small nonzero eigenvalues occur when there are thinly (loosely) connected components in the complex.
Such connections likely occur when the resolution scale $\veps$ is small, and they might not persist when $\veps$ increases. Our approach therefore provides a way to filter out noise and  estimate the features that persist at larger resolution scale.
Moreover, we note that computing the moments of the Laplacians $\Delta_k$ (exponential in size) for different $k$ is non-trivial, and these moments can be used as features for certain downstream learning tasks, for example.
\section{Conclusion}
In this paper, we presented a new quantum algorithm for Betti number estimation in topological data analysis. For many types of simplicial complexes, our algorithm may be \emph{implemented on real quantum computers} that exist today, i.e., NISQ implementable, and achieves guaranteed exponential speedups over classical algorithms under widely-held assumptions. The stochastic Chebyshev method presented here opens the door for the development of new quantum algorithms for other closely related problems that occur in applications of numerical linear algebra~\cite{ubaru2016fast,di2016efficient}, computational physics~\cite{lin2016approximating} and machine learning~\cite{han2017approximating,ubaru2017fast2}, among others. To the best of our knowledge, the presented methodology is the \emph{first}
QML algorithm with $O(n)$-depth implementation and likely exponential speedup, which offers the potential for
our
methods to become the first set of useful algorithms to achieve quantum advantage on arbitrary input. 
In fact, we have implemented our full algorithm and successfully executed it on a real quantum computer. The corresponding experiments (whose results will be reported in a future paper) confirm and support the theoretical results presented herein.
Moreover, our techniques in this paper have wider potential applications including the problem of estimating numerical rank~\cite{ubaru2016fast}, spectral densities~\cite{lin2016approximating} and other spectral sums~\cite{di2016efficient,han2017approximating,ubaru2017fast2}.

\subsubsection*{Acknowledgement}
This research was supported in part by the Air Force Research Laboratory (AFRL) grant number FA8750-C-18-0098, and in part by IBM Research, South Africa under the Equity Equivalent Investment Programme (EEIP) of the government of South Africa.
We would like to acknowledge Tal Kachman for the suggestion to use controlled-increment to entangle the simplices with the count register, and Yang-Hui He, Vishnu Jejjala and Kugendran Naidoo for discussions on the boundary operator representation.
The authors would also like to thank Scott Aaronson, Paul Alsing, Aram Harrow and Vasileios Kalantzis
for valuable discussions.
We further thank the support of Tom Ether, Maletsabisa Molapo, Bob Wisnieff, and our management at IBM Research.


\newpage
\appendix
\section{Methodology Details}
This appendix provides further technical details related to different aspects of our proposed NISQ-QTDA method. We first discuss in more detail the simulation of the boundary operator $B$ and show how this can be achieved with a circuit comprising only $O(n)$ gates. We then present some additional details related to the Chebyshev polynomial method and moment computation in the context of a quantum computer. Appendix~\ref{app:proof} subsequently presents a proof of our main Theorem~\ref{theo:main}.

\subsection{Boundary Operator Simulation}\label{app:BOsim}
In the main body of the paper, we presented an efficient representation for the boundary operator $B$ as a sum of Pauli operators. Here, we present the details on the simulation of $B$. For a quantum implementation, we need to construct the unitary $U_B(t) = e^{-iBt}$ for some $t$. To do so, we begin with the Trotter-Suzuki  formula~\cite{lloyd1996universal}:
Supposing an $n$-qubit Hamiltonian is written as $A = \sum_{j=1}^n A_j$, we then have 
\[
e^{-i\sum_{j=1}^n A_j t} =\prod_{j=1}^n e^{-iA_j t} + O(n^2 t^2),
\]
where the error in this approximation is negligible when $t\ll 1$. Higher order variants of this formula also exist~\cite{childs2018toward}.
Note that an error is incurred since the $A_j$ terms need not commute (e.g., $A_j A_k \ne A_k A_j$) in general. However, in the case of the boundary operators $\partial$ and $B$ described in Section~\ref{ssec:Delrep}, each $A_j$ are Pauli terms $\{\sigma_x,\sigma_y,\sigma_z,\sigma_i\}^{\otimes n}$. Recent articles~\cite{van2020circuit,gui2020term} discuss the quantum simulation of a sum of Pauli terms, and show how certain Pauli terms commute with each other. Hence, by permuting blocks to align the terms that commute, we can cancel several gates in order to reduce the size (both gate and depth complexity) of the circuit. 

The Pauli terms can be mapped to quantum circuits  using diagonalization and unitary evolution~\cite{whitfield2011simulation}. The Pauli terms $\{\sigma_z,\sigma_i\}^{\otimes n}$ are already diagonal, and thus we have $e^{i\sigma_i t} = e^{it}I$ and
\[
  e^{i\sigma_zt} = 
  \begin{bmatrix}
  e^{it} & 0 \\
  0 &  e^{-it}\\
  \end{bmatrix}
  =: R_z(t).
\]
Using diagonalization, we have the Pauli $\sigma_x = H \sigma_z H^T$, where $H = \frac{1}{\sqrt{2}}   \begin{bmatrix}
  1 & 1 \\
  1 &  -1\\
  \end{bmatrix} $ is the Hadamard matrix, and we further obtain 
$e^{-i\sigma_x t} = H e^{i\sigma_zt} H^T  =   H R_z(t) H^T $~\cite{van2020circuit}. Next, for the boundary operator $B=\partial+\partial^{\dagger}$, we have
\begin{eqnarray*}
  B & = & \sigma_x\otimes I \otimes I \otimes \ldots I \\
           & & + \sigma_z \otimes \sigma_x \otimes I \otimes \ldots I \\
            &  &\qquad \qquad  \vdots\\
  & &+ \sigma_z \otimes \sigma_z \otimes \sigma_z \otimes \ldots  \otimes \sigma_x ,
\end{eqnarray*}
since $a+a^{\dagger} = \sigma_x$. Therefore, we can simulate $e^{iBt}$ using a quantum circuit with a few CNOT, Hadamard $H$, and rotation $R_z(t)$ gates. 

For example, supposing $n=4$, the basic circuit (with one ancillary qubit) we obtain for $B$ using the above diagonalization is given in Figure~\ref{fig:1}. Note that the circuit has $O(n^2)$ gates (i.e., $n(n+1)$ CNOT gates, $n$ rotation $R_z(t)$ gates, and $2n$ Hadamard $H$ gates) with depth $O(n^2)$. This circuit is currently not NISQ. We observe, however, that the circuit can be simplified significantly by identifying sub-terms that commute and then using gate cancellations. Indeed, since the sub-terms 
$\{\sigma_z\}^{\otimes r}$ for different $r$ values between the terms in $B$ commute, a number of CNOT gates can be cancelled (because they are redundant). Similar gate count reduction is explored in~\cite{van2020circuit,gui2020term} for other problems involving Pauli terms in Hamitonian simulation.

\begin{figure}[h]
\begin{equation*}
\Qcircuit @C=0.5em @R=0.0em @!R {
	 \lstick{q_{0}} &  \gate{H} & \ctrl{4} & \qw  & \ctrl{4} & \gate{H}    
	 & \ctrl{4} & \qw & \qw & \qw &  \qw  &  \qw & \ctrl{4}  & \qw                
	 & \ctrl{4} & \qw & \qw & \qw &  \qw  &  \qw &  \qw  &  \qw & \qw &\qw & \ctrl{4} 
	 & \qw &  \ctrl{4} & \qw  &\qw & \qw & \qw &  \qw  &  \qw &  \qw   &  \qw &  \qw &  \qw & \qw &\qw &\qw  & \ctrl{4}&\qw \\
	 \lstick{q_{1}} &  \qw & \qw & \qw & \qw &  \qw 
	 & \qw &  \gate{H} & \ctrl{3} & \qw  & \ctrl{3} & \gate{H} &\qw &\qw
	 & \qw & \qw & \ctrl{3} & \qw & \qw & \qw &  \qw  & \qw   & \ctrl{3} & \qw & \qw & \qw  & \qw & \qw &\ctrl{3} & \qw & \qw & \qw &  \qw  & \qw &\qw  &  \qw  &\qw &\qw & \ctrl{3} & \qw & \qw &\qw \\
	 \lstick{q_{2}} &  \qw & \qw & \qw & \qw & \qw &\qw
	 &  \qw   & \qw & \qw & \qw & \qw  &  \qw & \qw &\qw
	 & \qw & \qw &  \gate{H} & \ctrl{2} & \qw  & \ctrl{2} & \gate{H} &\qw  &\qw 
	 & \qw & \qw & \qw &\qw &\qw &\qw &\ctrl{2} & \qw & \qw & \qw &  \qw  & \qw  & \ctrl{2} & \qw & \qw &\qw &\qw&\qw \\
	 \lstick{q_{3}} &  \qw & \qw & \qw & \qw &  \qw & \qw  & \qw & \qw & \qw & \qw  &  \qw & \qw & \qw  & \qw & \qw & \qw & \qw  &  \qw & \qw & \qw & \qw  &  \qw & \qw
	 & \qw &\qw
	 & \qw & \qw &  \qw &  \qw &  \qw &\gate{H} & \ctrl{1} & \qw  & \ctrl{1} & \gate{H} &\qw  &\qw  & \qw & \qw & \qw &\qw\\
    \lstick{\ket{0}} & \qw & \qw\bigoplus &\gate{R_z(t)}& \bigoplus & \qw  & \qw \bigoplus & \qw & \qw \bigoplus & \gate{R_z(t)}   & \qw \bigoplus & \qw  & \qw \bigoplus &\qw & \qw \bigoplus & \qw  & \qw \bigoplus & \qw  &\qw \bigoplus & \gate{R_z(t)}   & \bigoplus & \qw  & \qw \bigoplus & \qw & \qw \bigoplus 
    &\qw & \qw \bigoplus & \qw  & \qw \bigoplus &\qw & \qw \bigoplus & \qw  & \qw \bigoplus   & \gate{R_z(t)}   & \qw \bigoplus & \qw  & \qw \bigoplus & \qw & \qw \bigoplus&\qw & \qw \bigoplus&\qw \\
    & & &\sigma_x\otimes I \otimes I \otimes  I &  &&&  && \sigma_z \otimes \sigma_x \otimes I \otimes I &  &  &&&   &&&  && \sigma_z \otimes \sigma_z \otimes \sigma_x \otimes I & &  &&&   &&  &&&   &&  && \sigma_z \otimes \sigma_z \otimes \sigma_z \otimes \sigma_x &  \\
	 }
	 \end{equation*}
	 \caption{Basic circuit to simulate $B$ with $n=4$.
	 }\label{fig:1}
\end{figure}
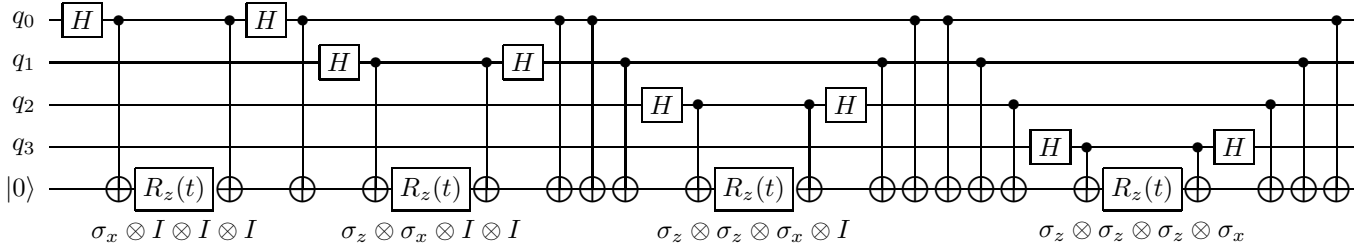

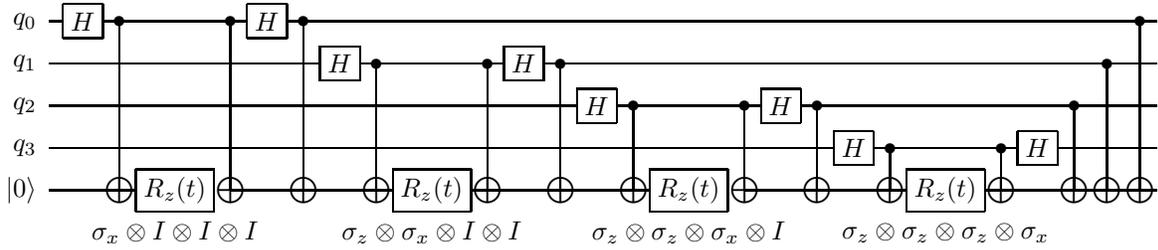
\begin{figure}[h]
\begin{equation*}
\Qcircuit @C=0.5em @R=0.0em @!R {
	 \lstick{q_{0}} &  \gate{H} & \ctrl{4} & \qw  & \ctrl{4} & \gate{H}    
	 & \ctrl{4} & \qw & \qw & \qw &  \qw  &  \qw                 
	  & \qw & \qw &  \qw  &  \qw &  \qw  &  \qw & \qw &\qw &\qw & \qw & \qw &  \qw  &  \qw &  \qw   &  \qw &  \qw  & \ctrl{4}&\qw \\
	 \lstick{q_{1}} &  \qw & \qw & \qw & \qw &  \qw 
	 & \qw &  \gate{H} & \ctrl{3} & \qw  & \ctrl{3} & \gate{H}   & \ctrl{3} & \qw & \qw & \qw &  \qw   & \qw & \qw & \qw & \qw  & \qw &\qw  &  \qw  &\qw &\qw & \ctrl{3} & \qw & \qw &\qw \\
	 \lstick{q_{2}} &  \qw & \qw & \qw & \qw & \qw &\qw
	 &  \qw   & \qw & \qw & \qw 
	 & \qw & \qw &  \gate{H} & \ctrl{2} & \qw  & \ctrl{2} & \gate{H} &\ctrl{2} & \qw  &\qw 
	 & \qw & \qw & \qw  &\ctrl{2} & \qw & \qw & \qw & \qw & \qw \\
	 \lstick{q_{3}} &  \qw & \qw & \qw   &  \qw & \qw & \qw & \qw  &  \qw & \qw & \qw & \qw  &  \qw & \qw
	 & \qw & \qw &  \qw &  \qw &  \qw &\gate{H} & \ctrl{1} & \qw  & \ctrl{1} & \gate{H} &\qw  &\qw  & \qw & \qw & \qw &\qw\\
    \lstick{\ket{0}} & \qw & \qw\bigoplus &\gate{R_z(t)}& \bigoplus & \qw  & \qw \bigoplus & \qw & \qw \bigoplus & \gate{R_z(t)}   & \qw \bigoplus & \qw   & \qw \bigoplus&\qw  &\qw \bigoplus & \gate{R_z(t)}   & \bigoplus & \qw    & \qw \bigoplus &\qw  & \qw \bigoplus   & \gate{R_z(t)}   & \qw \bigoplus & \qw  & \qw \bigoplus & \qw & \qw \bigoplus&\qw & \qw \bigoplus &\qw \\
    & & &\sigma_x\otimes I \otimes I \otimes I &  &&&&& \sigma_z \otimes \sigma_x \otimes I \otimes I &&&&&& \sigma_z \otimes \sigma_z \otimes \sigma_x \otimes I &&&&&& \sigma_z \otimes \sigma_z \otimes \sigma_z \otimes \sigma_x &  \\
	 }
\end{equation*}
	 \caption{Simplified circuit after gate cancellations. }\label{fig:2}
\end{figure}
We therefore obtain a simplified circuit for $B$ that has just $O(n)$ gates, as shown in Figure~\ref{fig:2}, using gate cancellations related to commuting Pauli subterms.
The simplified circuit for simulating $e^{iBt}$ now has $O(n)$ gates, i.e., $2(2n-1)$ CNOT gates, $2n$ Hadamard gates, and $n$ rotation gates. The  depth of this circuit is only $4n$, and therefore it is   NISQ implementable.

\subsection{Chebyshev Approximation}\label{app:chebyshev}
We now present details about Chebyshev polynomial approximations.
Given a function $f:[-1,1] \rightarrow \RR$, an $m$-degree Chebyshev polynomial approximation is then expressed as
     \[
     f(x) \approx p_m(x) = \sum_{j=0}^m c_j T_j(x),
     \]
    where  $T_j(x)$ is the $j$th-degree Chebyshev polynomial of the first kind, formally defined as 
$T_j (x) = \cos(j \cos^{-1} (x))$.
We therefore have $T_0(x) =1$, $T_1(x) = x$ and
    \[
    T_{j+1}(x) = 2x T_j(x) - T_{j-1} (x),
    \]
   and the (interpolation) coefficients are computed as
    \[
    c_j = \frac{2-\delta_{j0}}{\pi}\int_{-1}^1 \frac{f(x)T_j(x)}{\sqrt{1-x^2}}dx \qquad \text{ or } \qquad
     c_j = \frac{2-\delta_{j0}}{m+1}\sum_{k=0}^m f(x_k)T_j(x_k)
         \]
         with Chebyshev nodes $x_k = \cos\left(\frac{\pi(k+1)/2}{m+1}\right)$,
when interpolation is used. 
Supposing the given matrix $A$ has its spectrum in the interval $[\lambda_{\min},\lambda_{\max}]$, then the matrix  $\tilde{A} = \left(\frac{2A - (\lambda_{\max}+\lambda_{\min})I}{\lambda_{\max}-\lambda_{\min}}\right)$ has the spectrum in the interval $[-1,1]$.

Given a function $f(\cdot)$, we can compute the trace of the matrix function,
$\trace(f(A))$, using the stochastic Chebyshev method. 
 We approximate $\bra{v_l}f(A) \ket{v_l} \approx \bra{v_l} TB \ket{v_l}$, where $B = \sum_{j=0}^m \tilde{c}_j T_j(\tilde{A})$.
 Let $\ket{w_l^{(j)}} = T_j(\tilde{A})\ket{v_l}$ with
 $$ \ket{w_l^{(j+1)}} = 2\tilde{A}\ket{w_l^{(j)}} - \ket{w_l^{(j-1)}},$$
where $~\ket{w_l^{(0)}} = \ket{v_l}~$ and  ~$\ket{w_l^{(1)}}= \tilde{A}\ket{v_l}$. 
%
The matrix function trace can then be estimated as
 \begin{equation}\label{eq:rank}
\trace(f(A))_\approx \frac{1}{\nv}\sum_{l=1}^{\nv}\left[\sum_{j=0}^m\tilde{c}_j \bra{v_l}\ket{w_l^{(j)}}\right].
\end{equation}
The computational time complexity of the method will be $O(\nz(A)m\nv)$, where $\nz(A)$ is the number of nonzeros in the matrix $A$.
This method has been quite popular for estimating spectral density~\cite{wang1994calculating,lin2016approximating} (where it is known as the \emph{Kernel Polynomial Method}), for estimating the eigencount of a matrix~\cite{di2016efficient}, for numerical rank estimation~\cite{ubaru2016fast,ubaru2017fast}, and for other analytic function estimation and applications~\cite{han2017approximating}.
 
In this paper, we are interested in expanding the step function as
$
 h(t) \approx \sum_{j=0}^mc_jT_j(t).
$
The expansion coefficients $c_j$ for the polynomial to 
approximate a step function $h(t)$, 
taking value $1$ in $[a,\ b]$   and $0$ elsewhere, are known to be given by
\[
 c_j=\begin{cases}
\frac{1}{\pi}(\cos^{-1}(a)-\cos^{-1}(b)) & : \: j=0\\
                                  \frac{2}{\pi}\left(\frac{
                                  \sin(j\cos^{-1}(a))-\sin(j\cos^{-1}(b))}{j}\right)  & : \: j>0
                        \end{cases} \;\;.
\]
Since the step function is a discontinuous function, we use in our analysis a surrogate function $f(x) = \frac{1}{2}\left(1+\tanh(\alpha(x - \frac{\delta}{2}))\right)$, with $\alpha  = \frac{1}{\delta}\log(\frac{2}{\epsilon})$. 

\subsection{Moments Computation}\label{app:moments}
The proposed NISQ-QTDA Algorithm~\ref{alg:algo1} uses the quantum computer  only to calculate the moments of the Laplacian $\Delta_k$, i.e., $\mu_l^{(i)}$ for $i=0,\ldots,m$ and $l=1,\ldots,\nv$. Letting $\ket{v_l}$ be a random state vector, the $i$th moment is given by
\begin{align}
  \mu_l^{(i)} = \bra{v_l} \Delta_k^i \ket{v_l} &= \bra{v_l} (P_k \PG B \PG B \PG P_k)^i \ket{v_l} .
\end{align}
For a given $\{i,l\}$, suppose $\ket{\phi_l^{(i)}} = \left(\prod_{j=0}^{j=i-1} \PG P_k^{j  \textrm{\%} 2} B\right) \PG P_k \ket{v_l}$, where the binary modulo operation $\%2$ indicates that the projection $P_k$ is applied only in the odd rounds, since we will not have any components of order $k$ after an odd number of $B$ applications. We can compute the moments $\mu_l^{(i)}$ using two approaches. 
The first approach is to first compute the state $\ket{\phi_l^{(i)}}$, then compute its inverse (i.e., complex conjugate), and estimate/measure  $ \mu_l^{(i)}= \left\langle\phi_l^{(i)}\bigg|\phi_l^{(i)}\right\rangle$.  A drawback of this approach is that it requires us to prepare two quantum states (i.e., $\ket{\phi_l^{(i)}}$ and its inverse).
An alternative approach is to compute the state $\ket{\phi_l^{(i)}}$ and estimate its norm, given that $\mu_l^{(i)} = \left\|\ket{\phi_l^{(i)}}\right\|^2$. Since the state vector is of exponential size $2^n$, we will need to use a repeated counting technique to estimate its norm.


Both of these approaches reduce to preparing the quantum states  $\ket{\phi_l^{(i)}} = \left(\prod_{j=0}^{j=i-1} \PG P_k^{j  \textrm{\%} 2} B\right) \PG P_k \ket{v_l}$ for $i=0,\ldots,m$ and $l=1,\ldots,\nv$.
In Section~\ref{ssec:projection} of the main body of the paper, we presented methods to build the projectors $\PG$ and $P_k$ on a NISQ device. We also discussed the simulation of $B$ on a NISQ device in the previous section. Therefore, using a quantum circuit of $O(n)$ size (both gate and depth complexity),  we can compute $\ket{\phi_l^{(i)}}$ and the moments $\mu_l^{(i)}$ in order to estimate the Betti numbers $\chi_k$ using  our Algorithm~\ref{alg:algo1}.

\paragraph{Errors:} Note that the above quantum moment computation will incur a certain error in the estimation. This is because we only simulate the unitary operator $U_B(t) = e^{-iBt}$ in the quantum computer, and we have 
\[
U_B(t) = e^{-iBt} = 1 - iBt - B^2t^2/2 + O(Bt^3) = 1 - iBt - nIt^2/2 + O(Bt^3) ,
\]
as $t\rightarrow 0$.
Hence, say for $i=1$, what gets measured is given by
\begin{align*}
  M(\ket{\phi_l}) &= \bra{v_l} P_k \PG U_B \PG U_B \PG P_k\ket{v_l} \\
             & = \bra{v_l}P_k \PG (1 - iBt) \PG(1 - iBt) \PG P_k \ket{v_l}
                -n \bra{v_l} P_k \PG P_k \ket{v_l} O(t^2) + O(t^3) \\
             &\approx \bra{v_l} P_k \PG P_k \ket{v_l}(1-O(nt^2)) - 2it\bra{v_l} P_k \PG B \PG P_k \ket{v_l} - t^2 \bra{v_l} P_k \PG B \PG B \PG P_k \ket{v_l} .
\end{align*}
We therefore obtain
\begin{align}
  \bra{v_l} \Delta_k \ket{v_l} &\approx \textrm{Re}(\bra{v_l}P_k \PG P_k \ket{v_l}(1-O(nt^2)) - M)/t^2 .
\end{align}
Errors in the calculation of higher moments can be estimated in a similar fashion. However, since we can choose $t$ to be very small, the error in the moment computation will be negligible. We refer to the next section for further details.


\section{Proof of Theorem~\ref{theo:main}}\label{app:proof}
Here we present the proof of our main Theorem~\ref{theo:main}. To this end, we first need the following error analysis for stochastic trace estimation using random Hadamard vectors.
\paragraph{Random Hadamard vectors:} 
For the stochastic trace estimator discussed in the main body of the paper, we need to form the random  vector state $|v_l \rangle$. It turns out that random  Hadamard columns have been shown to perform well in practice for trace estimation~\cite{fika2017stochastic}. Sampling a random Hadamard state on a quantum computer is feasible using a short-depth circuit. When the random state vector $| v_l\rangle = | h_{c(l)}\rangle$ is some random Hadamard column with $c(l)$ defining the random index, then the estimate
$\langle h_{c(l)} | A | h_{c(l)}\rangle$ can be viewed as a uniform random sample 
of the transformed matrix $M = HAH^T$ with  the Hadamard matrix $H$, i.e., 
$\langle h_{c(l)} | A | h_{c(l)}\rangle  = \langle e_{c(l)} | M | e_{c(l)}\rangle$ where $| e_{l}\rangle$ are basis vectors. We can now use the analysis of unit vector estimators in~\cite{avron2011randomized} to obtain error bounds. In particular, we apply Theorem $16$ of Avron and Toledo~\cite{avron2011randomized}.

\begin{lemma}\cite[Theorem 16]{avron2011randomized}\label{lemma:hada}
Assume we are given a Hermitian matrix $A\in\RR^{N\times N}$, error tolerance $\epsilon\in(0,1)$ and probability parameter $\eta\in (0,1)$. Then, for random state vectors $|v_l\rangle = | h_{c(l)}\rangle$ as random Hadamard columns, $l=1,\ldots,\nv$, and for $\nv \geq \frac{r^2_H(A)\log(2/\eta)}{\epsilon^2}$ where $r_H(A) = \max_i A_{ii}$, we have 
\begin{equation}\label{sec:eq:epsdel}
    \pr\left(\left| \frac{1}{\nv}  \sum_{l=1}^{\nv} \langle v_l | A | v_l\rangle
    - \trace(A)\right| \leq \epsilon \cdot N\right) \geq 1-\eta \; .
\end{equation}
\end{lemma}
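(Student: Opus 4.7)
The plan is to reduce the Hadamard-column estimator applied to $A$ to the standard uniform diagonal (unit-vector) trace estimator applied to a unitarily-conjugated matrix, and then to invoke Theorem~16 of Avron and Toledo as a black box. The entire proof is essentially a change-of-basis argument exploiting the fact that the Hadamard matrix is both symmetric and orthogonal.

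Let $H\in\RR^{N\times N}$ denote the (normalized) Hadamard matrix whose columns are the states $\ket{h_j}$, $j=0,\ldots,N-1$. Since $H = H^T$ and $H^2=I$, the conjugated matrix $M:=HAH$ is Hermitian and, by cyclicity of trace, satisfies $\trace(M)=\trace(AH^2)=\trace(A)$. For the uniformly-random index $c(l)\in\{0,\ldots,N-1\}$ (drawn via the classically-generated $n$-bit binary number), the key algebraic identity is
\[
\langle h_{c(l)}|A|h_{c(l)}\rangle \;=\; e_{c(l)}^T (HAH) e_{c(l)} \;=\; M_{c(l),c(l)}.
\]
Hence the estimator in the lemma equals $\frac{1}{\nv}\sum_{l=1}^{\nv} M_{c(l),c(l)}$, which is exactly the uniform diagonal sampler of $\trace(M)$ studied in Avron--Toledo. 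Each sample has expectation $\trace(M)/N=\trace(A)/N$, and the samples are i.i.d.\ because the indices $c(l)$ are drawn independently.

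With the problem recast in this form, I would apply Theorem~16 of Avron and Toledo directly to $M$: for $\nv\ge r_H^2(M)\log(2/\eta)/\epsilon^2$ with $r_H(M):=\max_i M_{ii}$, the average concentrates within additive error $\epsilon N$ of $\trace(M)=\trace(A)$ with probability at least $1-\eta$. Chaining the identities $\trace(M)=\trace(A)$ and $\langle h_{c(l)}|A|h_{c(l)}\rangle = M_{c(l),c(l)}$ then yields the claimed bound verbatim.

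The principal obstacle is reconciling the sample-complexity parameter: Avron--Toledo's bound naturally involves $r_H(M)$, while the lemma is phrased with $r_H(A)=\max_i A_{ii}$. In general these differ. However, since $M$ is unitarily similar to $A$, we have $\|M\|_2=\|A\|_2$, and in the applications at hand $A$ is PSD (a polynomial in the combinatorial Laplacian evaluated on a PSD step-like function), so $\max_i M_{ii}\le \|M\|_2=\|A\|_2$, which is in turn controlled by the spectrum of $A$. To make the dependence on $r_H(A)$ clean one can alternatively bypass the black box altogether and apply Hoeffding directly to the bounded i.i.d.\ variables $M_{c(l),c(l)}\in[\lambda_{\min}(A),\lambda_{\max}(A)]$; this recovers the same $\log(2/\eta)/\epsilon^2$ dependence and lets us absorb any constant mismatch between $r_H(M)$ and $r_H(A)$ into the spectral range, which suffices for all downstream uses in Theorem~\ref{theo:main}.
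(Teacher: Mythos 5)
Your route is the same one the paper takes---conjugate by the Hadamard matrix, view each sample as a diagonal entry of $M$, and run the Hoeffding argument behind Theorem 16 of Avron--Toledo with $t=\epsilon N$---but your write-up contains a genuine normalization gap that breaks the central step. You define $H$ as the \emph{normalized} Hadamard matrix, under which each sample $\langle h_{c(l)}|A|h_{c(l)}\rangle = M_{c(l),c(l)}$ has expectation $\trace(M)/N=\trace(A)/N$ (as you yourself note), so the plain average $\frac{1}{\nv}\sum_{l} M_{c(l),c(l)}$ concentrates around $\trace(A)/N$, \emph{not} around $\trace(A)$; your claim that it ``concentrates within additive error $\epsilon N$ of $\trace(M)=\trace(A)$'' is therefore false. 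Take $A=I$: every sample equals $1$ while $\trace(A)=N$, so the deviation is $N-1>\epsilon N$ regardless of $\nv$. Avron--Toledo's unit-vector estimator carries an explicit prefactor $N$ in front of the average precisely to remove this bias, and your reduction drops it. The lemma as stated is correct only for the \emph{unnormalized} $\pm1$-entry Hadamard columns with $\|h_i\|^2=N$, which is exactly the convention the paper's proof records (``$H$ has orthogonal columns with $\|h_i\|^2=N$''): then $\mathbb{E}\left[\langle h_c|A|h_c\rangle\right]=\frac{1}{N}\trace(H^{T}AH)=\trace(A)$, the samples lie in $\left[N\lambda_{\min}(A),\,N\lambda_{\max}(A)\right]$, and Hoeffding at scale $t=\epsilon N$ yields $\nv=O(\log(2/\eta)/\epsilon^2)$ up to the squared range. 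The repair is mechanical---restore the factor $N$, or use $\pm1$ columns throughout---but without it several of your displayed identities are off by a factor of $N$ and the proof establishes a different statement.

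On the range parameter, by contrast, your instinct is sound and you are in fact more careful than the paper. The paper's sketch closes the same gap you flag by asserting $M_{ii}=N\cdot A_{ii}$, which is not a valid identity in general: for $A=\frac{1}{N}\mathbf{1}\mathbf{1}^{T}$ one has $\max_i A_{ii}=1/N$, yet $h_0^{T}Ah_0=N$ for the all-ones column, so the stated dependence on $r_H(A)=\max_i A_{ii}$ does not actually follow from that argument. Your fallback---Hoeffding applied directly to the bounded i.i.d.\ samples, with the range controlled by the spectral interval of $A$---proves the lemma with $r_H(A)$ replaced by the spectral range (or $\|A\|_2$ when $A$ is PSD), and, as you correctly observe, this is all that the application in Theorem~\ref{theo:main} requires, since there $A=p_m(\tilde{\Delta}_k)$ with $|p_m(x)|\leq 1$ on the relevant interval. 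One small caveat: $p_m(\tilde{\Delta}_k)$ need not be PSD, since the Chebyshev surrogate can dip slightly below zero, so phrase the bound via the two-sided range $\left[\lambda_{\min},\lambda_{\max}\right]$ rather than through PSD-ness.
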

The proof follows from the arguments establishing Theorem 16 in~\cite{avron2011randomized}, where we set $t = \epsilon \cdot N$ in the Hoeffding's inequality, the samples take values in the interval $[0,\max_i M_{ii}]$, and we know $M_{ii} = N\cdot A_{ii}$ since $H$ has orthogonal columns with $\|h_i\|^2 = N$. 

\paragraph{Proof of Theorem~\ref{theo:main}:} We now recall our main theorem and present its proof. 
\addtocounter{theorem}{-1}
\begin{theorem}
Assume we are given the pairwise distances of any $n$ data points and the encoding of the corresponding $\veps$-close pairs, together with an integer $0\leq k\leq n-1$ and the parameters $(\epsilon,\delta,\eta)\in (0,1)$.
Further assume the smallest nonzero eigenvalue of the scaled Laplacian $\tilde{\Delta}_k$ is greater than or equal to $\delta$, and choose $\nv$ and $m$ such that
\[
\nv = O\left(\frac{\log(2/\eta)}{\epsilon^2}\right)
\qquad\qquad \mbox{and} \qquad\qquad
m \geq 
 \frac{\log\left(\frac{32\log(2/\epsilon)}{\pi\delta\epsilon}\right)}{\log\left(1+\frac{\pi\delta}{4\log(2/\epsilon)}\right)}.
\]
Then, the Betti number estimation $\chi_k \in[0,1]$ by Algorithm~\ref{alg:algo1} satisfies
 \[
  \left| \chi_k - \frac{\beta_k}{\dim\tH_k}\right| \leq \epsilon,
  \]
with probability at least $1-\eta$.
\end{theorem}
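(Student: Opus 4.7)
The plan is to decompose the error $|\chi_k - \beta_k/\dim\tH_k|$ via the triangle inequality into three pieces: (i) the deterministic bias from replacing the discontinuous step function $h$ by the analytic surrogate $f(x) = (1 + \tanh(\alpha(x - \delta/2)))/2$ with $\alpha = \log(2/\epsilon)/\delta$; (ii) the deterministic bias from truncating the Chebyshev expansion of $f$ at degree $m$; and (iii) the stochastic error from replacing $\trace(p_m(\tilde{\Delta}_k))$ by the Hadamard-vector estimator over $\nv$ samples. A fourth contribution from the Trotter simulation of $e^{-iBt}$ and finite-shot measurement of each moment will be absorbed by choosing the simulation time $t$ sufficiently small, as already foreshadowed in the moment-computation discussion.

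For the surrogate bias I would exploit the spectral-gap hypothesis, which forces every eigenvalue $\lambda$ of $\tilde{\Delta}_k$ to lie in $\{0\} \cup [\delta, 1]$. A direct computation of $\tanh(\alpha\delta/2)$ with $\alpha = \log(2/\epsilon)/\delta$ gives $|f(0) - 0| = 1/(e^{\log(2/\epsilon)} + 1) \leq \epsilon/2$ and symmetrically $|f(\lambda) - 1| \leq \epsilon/2$ for $\lambda \in [\delta,1]$. Summing over the (at most $\dim\tH_k$) relevant eigenvalues yields $|\trace(f(\tilde{\Delta}_k)) - \rank(\tilde{\Delta}_k)| \leq (\epsilon/2)\dim\tH_k$, i.e., at most $\epsilon/2$ after normalizing by $\dim\tH_k$.

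For the truncation bias I would invoke the standard Bernstein estimate for Chebyshev approximation: if $f$ is analytic and uniformly bounded inside the Bernstein ellipse $E_\rho$ with parameter $\rho > 1$, then its degree-$m$ Chebyshev partial sum satisfies $\|f - p_m\|_\infty = O(\rho^{-m}/(\rho - 1))$. The nearest complex poles of the shifted hyperbolic tangent lie at $\delta/2 \pm i\pi/(2\alpha)$, so after the affine change of variable mapping $[0,1]$ onto $[-1,1]$ one obtains $\rho = 1 + \pi\delta/(4\log(2/\epsilon)) + O(\alpha^{-2})$. Imposing $\|f - p_m\|_\infty \leq \epsilon/4$ and solving for $m$ reproduces exactly the $\log/\log$ lower bound stated in the theorem, up to absorbing universal constants. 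Since $\tilde{\Delta}_k$ is Hermitian with spectrum in $[0,1]$, this sup-norm bound transfers to $|\trace(f(\tilde{\Delta}_k)) - \trace(p_m(\tilde{\Delta}_k))| \leq (\epsilon/4)\dim\tH_k$.

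For the stochastic piece I would apply the Hadamard trace-estimation bound (Lemma 1 above) to the normalized matrix $A = p_m(\tilde{\Delta}_k)/\dim\tH_k$; from the preceding step $\|p_m(\tilde{\Delta}_k)\|_\infty \leq 1 + \epsilon/4$, so $r_H(A) = O(1/\dim\tH_k)$, and unfolding the $N = \dim\tH_k$ factor in Lemma~\ref{lemma:hada} shows that $\nv = O(\log(2/\eta)/\epsilon^2)$ samples suffice to ensure an additive error of at most $\epsilon/4$ with probability $\geq 1 - \eta$. Collecting the three bounds with the triangle inequality, and noting that the Trotter discretization error per moment is $O(nt^2)$ and can be made negligible by choosing $t$ small, yields the claim. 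The main technical obstacle will be the Bernstein-ellipse step for the shifted tanh: one must carefully locate the complex poles under the $[0,1] \to [-1,1]$ rescaling and track constants so that the implicit bound on $m$ agrees precisely with the stated $\log\bigl(32\log(2/\epsilon)/(\pi\delta\epsilon)\bigr)/\log\bigl(1 + \pi\delta/(4\log(2/\epsilon))\bigr)$; the surrogate and trace-estimation steps are essentially bookkeeping once this constant matching is done.
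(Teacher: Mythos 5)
Your proposal follows essentially the same route as the paper's proof: the same $\tanh$ surrogate with $\alpha=\log(2/\epsilon)/\delta$ and the pointwise bound $e^{-\alpha\delta}=\epsilon/2$ on $\{0\}\cup[\delta,1]$, the same Bernstein-ellipse bound for the Chebyshev truncation (the paper imports it as Theorem~1 of Han et~al., yielding precisely your $\rho=1+\pi/(4\alpha)$ and the stated lower bound on $m$), the same Avron--Toledo Hadamard-column trace-estimation lemma for the stochastic term, and the same absorption of the Trotter and moment-readout errors by taking $t$ small. Your $\epsilon/2+\epsilon/4+\epsilon/4$ bookkeeping is in fact slightly tighter than the paper's, which sums two $\epsilon\cdot\dim\tH_k$ terms, but this is a constant-factor matter and not a different argument.
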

\begin{proof}
We assume the Laplacian $\Delta_k$ has eigenvalues in the interval $[0,1]$. Since the largest eigenvalue is bounded by $O(k)$~\cite{gunn2019review}, we can scale $\Delta_k$ by $O(1/k)$ to ensure the spectrum of the scaled $\tilde{\Delta}_k$ is in $[0,1]$. Our approach consists of applying the stochastic Chebyshev method to the function $f(x) = \frac{1}{2}\left(1+\tanh(\alpha(x - \frac{\delta}{2}))\right)$, with $\alpha  = \frac{1}{\delta}\log(\frac{2}{\epsilon})$, as an approximation of the step function $S(x)$, which takes the value $1$ in the interval $[\delta,1]$ and the value zero elsewhere.
  If the smallest nonzero eigenvalue of $\tilde{\Delta}_k$ is greater than $\delta$, then the rank of $\Delta_k$ is $\trace(S(\tilde{\Delta}_k))$. 
Let $p_m$ be an $m$-degree Chebyshev approximation of $f(x)$ and let $I_{\delta} = \{0,[\delta,1]\}$, the interval where the eigenvalues of $\tilde{\Delta}_k$ lie.

We first show that
\[
 \max_{x\in I_{\delta}} |S(x) - p_m(x)|\leq \epsilon \;\;\; .
\]
We know
\[
 \max_{x\in I_{\delta}} |S(x) - p_m(x)|\leq  \max_{x\in I_{\delta}} |S(x) - f(x)|+ \max_{x\in I_{\delta}} |f(x) - p_m(x)|,
\]
and thus we need to show that the two terms are bounded by $\epsilon/2$.
For the first term, with  $f(x) = \frac{1}{2}\left(1+\tanh(\alpha(x - \frac{\delta}{2}))\right)$ and upon choosing $\alpha  = \frac{1}{\delta}\log(\frac{2}{\epsilon})$, we have
\begin{eqnarray*}
    \max_{x\in I_{\delta}} |S(x) - f(x)| &= &\max_{x\in I_{\delta}} \frac{1}{2}\left|1 - \tanh\left(\alpha\left(x - \frac{\delta}{2}\right)\right)\right|\\
   & = &\frac{1}{2}\left(1 - \tanh( \frac{\alpha\delta}{2})\right)\\
   & = &\frac{e^{-\alpha\delta}}{1+e^{-\alpha\delta}}\\
   & \leq &e^{-\alpha\delta} = \frac{\epsilon}{2}.
\end{eqnarray*}
For the second term, we use Theorem 1 of Han et al.~\cite{han2017approximating} which provides an error analysis for the Chebyshev approximation of functions that are analytic over a Bernstein ellipse. In addition, Theorem 11 of~\cite{han2017approximating} considers a similar $\tanh$ function. Adapting the proof of Han et al.\ to our setting, we obtain
\[
\max_{x\in [-1,1]} |f(x) - p_m(x)| \leq \frac{16\alpha}{\pi(1+\frac{\pi}{4\alpha})^m}.
\]
Hence, by setting $m \geq \frac{\log(32\alpha)-\log(\pi\epsilon)}{\log(1+\frac{\pi}{4\alpha})}$, we have
$\max_{x\in [-1,1]} |f(x) - p_m(x)| \leq \frac{\epsilon}{2}$, and thus we obtain the required bound of $\max_{x\in I_{\delta}} |S(x) - p_m(x)|\leq \epsilon$.

Next, note that the size of $\tilde{\Delta}_k$ is $\dim\tH_k$, and the Betti number is given by $\beta_k = \dim\tH_k - \trace(S(\tilde{\Delta}_k))$. The stochastic Chebyshev method approximates the trace as  $\trace_{\nv}(p_m(\tilde{\Delta}_k))= \frac{1}{\nv}\sum_l\langle v_l | p_m(\tilde{\Delta}_k) | v_l \rangle $, for random vector states $| v_l \rangle$. The Betti number estimate $\chi_k$ is then given by $\chi_k = 1 - \frac{\trace_{\nv}(p_m(\tilde{\Delta}_k))}{\dim\tH_k}$. We therefore need to bound
\[
\left| \chi_k - \frac{\beta_k}{\dim\tH_k}\right|  = \frac{1}{\dim\tH_k} \left|\trace_{\nv}(p_m(\tilde{\Delta}_k)) - \trace(S(\tilde{\Delta}_k))\right|. 
\]
Now, we have
\[
\left|\trace_{\nv}(p_m(\tilde{\Delta}_k)) - \trace(S(\tilde{\Delta}_k))\right| \leq
\left|\trace_{\nv}(p_m(\tilde{\Delta}_k)) - \trace(p_m(\tilde{\Delta}_k))\right|+
\left|\trace(p_m(\tilde{\Delta}_k)) - \trace(S(\tilde{\Delta}_k))\right|.
\]
From Lemma~\ref{lemma:hada}, since the maximum diagonal entry of $\tilde{\Delta}_k$  is $O(1)$ and $|p_m(x)|\leq 1$ for $x\in(0,1)$, we obtain for $\nv = O(\frac{\log(2/\eta)}{\epsilon^2})$
\[
\left|\trace_{\nv}(p_m(\tilde{\Delta}_k)) - \trace(p_m(\tilde{\Delta}_k))\right| \leq \epsilon \cdot  \dim\tH_k \;\; . 
\]
Moreover, we have from above
\begin{eqnarray*}
  \left|\trace(p_m(\tilde{\Delta}_k)) - \trace(S(\tilde{\Delta}_k))\right| & = & 
  \sum_{i=1}^{\dim\tH_k} |S(\lambda_i) - p_m(\lambda_i)|\\
  &\leq &  \dim\tH_k \cdot \max_{I_{\delta}} |S(x) - p_m(x)|\\
  &\leq & \epsilon \cdot  \dim\tH_k\;\; .
\end{eqnarray*}

\paragraph{ Additional error terms:} Finally, we note that there are two additional errors which occur in our algorithm, namely: (1) error in the simulation of $U_B = e^{iBt}$ (which is $O(n^2t^2)$, as discussed in Section~\ref{app:BOsim}); and (2) error in the estimation of the moments $\mu_l^{(i)} = \bra{v_l}\tilde{\Delta}_k^i \ket{v_l}$ (which is $O((i)\cdot t^2)$, as shown in Section~\ref{app:moments}). Both of these errors are additive in nature.  We therefore can combine the errors in our moment estimate using the quantum algorithm, say for $\hat{\mu}_l^{(i)}$, and write
\[
\left| \hat{\mu}_l^{(i)}- \mu_l^{(i)}\right| \leq (i)\cdot \hat{\eps},
\]
where $\hat{\eps} = O(n^2t^2)$ is the combined error.  The error in the Chebyshev moments estimate $\hat{\theta}_l^{(j)}$ is then given by $\left| \hat{\theta}_l^{(j)}- \theta_l^{(j)}\right| \leq C_{T_j} j \hat{\eps},$ where $C_{T_j}$ is some constant depending on the coefficients  of the Chebyshev polynomial $T_j$ of degree $j$ (with respect to monomials). Note that $C_{T_j}$ will be small since $|T_j(x)| \leq 1$ for $x\in[-1,1]$.
Hence, since the coefficients $c_j$ are of the form given in Section~\ref{app:chebyshev}, the error in the Betti number estimate $\hat{\chi}_k = 1-\frac{1}{\nv\dim \tH_k}\sum_{l=1}^{\nv}\left[\sum_{j=0}^m c_j\hat{\theta}_l^{(j)}\right]$ is given by
\begin{equation}\label{eq:add_err}
\left| \hat{\chi}_k- \chi_k\right| \leq \frac{Cm \hat{\eps}}{\dim \tH_k},
\end{equation}
where $C = \frac{2}{\pi}\max_j[{C_{T_j}(\sin(j\cos^{-1}(1))-\sin(j\cos^{-1}(\delta)))}]$ is some  constant.
Since  a choice of small $t\ll 1$ suffices for the simulation (and is also a requirement for the asymptotic error bound to hold; see Section~\ref{app:BOsim}), and since we are interested in simplices/clique-dense complexes for which $\dim\tH_k$ will be large, we then have $\hat{\eps}\ll \sqrt{\frac{\eps\cdot \dim\tH_k}{Cm}}$. Therefore, the above error in the estimation of $\hat{\chi}_k$ will be negligible, and the right-hand side of~\eqref{eq:add_err} will be $\ll\eps$.

In addition to these errors, during the actual hardware implementation, we will encounter additional errors due to noise. Two sources of noise exist, namely (1) shot noise due to measurement and (2) hardware noise. The shot noise is typically modelled using a Gaussian assumption, and hence is assumed to reduce as $O(1/\sqrt{T})$ for $T$ repeated measurements. The hardware noise is much more difficult to characterize, since it is hardware and technology dependent. Therefore, in order to account for errors due to these two sources of noise, we will need to repeat the whole experiment several times and draw statistics to compute the Betti numbers. 

Combining the results yields the desired bound in the theorem.
\end{proof}

\end{document}